\theoremstyle{plain}
\newtheorem{theorem}{Theorem}[section]
\newtheorem{proposition}[theorem]{Proposition}
\newtheorem{condition}[theorem]{Condition}
\theoremstyle{remark}
\newtheorem{remark}[theorem]{Remark}
\numberwithin{equation}{section}
\newcommand{\RR}{\mathbb{R}}
\newcommand{\vc}[1]{\bm{#1}}
\newcommand{\PP}{\mathbb{P}}
\newcommand{\EE}{\operatorname{\mathbb{E}}}
\newcommand{\cop}{\mathbb{C}_n}
\newcommand{\copb}{\mathbb{C}_n^{\beta}}
\newcommand{\ellb}{\widehat{\ell}_{n,k}^{\beta}}
\newcommand{\ellbst}{\widehat{\ell}_{n,k}^{\beta *}}
\newcommand{\I}{\operatorname{\mathbbm{1}}}
\newcommand{\oh}{\mathrm{o}}
\newcommand{\Oh}{\mathrm{O}}
\newcommand{\diff}{\,\mathrm{d}}
\newcommand{\eps}{\varepsilon}
\renewcommand{\leq}{\leqslant}
\renewcommand{\geq}{\geqslant}
\renewcommand{\le}{\leqslant}
\renewcommand{\ge}{\geqslant}
\begin{document}
\title[Empirical beta stable tail dependence function]{An estimator of the stable tail dependence function based on the empirical beta copula}
\author{Anna Kiriliouk}
\author{Johan Segers}
\author{Laleh Tafakori}
\date{\today}

\begin{abstract}
The replacement of indicator functions by integrated beta kernels in the definition of the empirical stable tail dependence function is shown to produce a smoothed version of the latter estimator with the same asymptotic distribution but superior finite-sample performance. The link of the new estimator with the empirical beta copula enables a simple but effective resampling scheme.

\emph{Key words and phrases:} Bernstein polynomial, Brown--Resnick process, bootstrap, copula, empirical process, max-linear model, tail copula, tail dependence, weak convergence.
\end{abstract}

\address{Universit\'e catholique de Louvain, Institut de statistique, biostatistique et sciences actuarielles, Voie du Roman Pays 20, B-1348 Louvain-la-Neuve, Belgium}
\email{anna.kiriliouk@uclouvain.be}
\address{Universit\'e catholique de Louvain, Institut de statistique, biostatistique et sciences actuarielles, Voie du Roman Pays 20, B-1348 Louvain-la-Neuve, Belgium}
\email{johan.segers@uclouvain.be}
\address{The University of Melbourne, School of Mathematics and Statistics, Melbourne, VIC 3010, Australia}
\email{laleh.tafakori@unimelb.edu.au}
\maketitle

\section{Introduction}

Let $\bm{X} = (X_1, \ldots, X_d)$ be a random vector with continuous marginal cumulative distribution functions $F_1, \ldots, F_d$. The dependence between the $d$ components at high marginal levels is captured by the stable tail dependence function (stdf)
\begin{equation}
\label{eq:stdf}
  \ell(\bm{x})
  =
  \lim_{t \downarrow 0} 
  t^{-1} \PP\{F_1(X_{1}) > 1-tx_{1}\;\ \text{or}\, \ldots \, \text{or}\;\ F_d(X_{d}) > 1 - tx_{d}\},
\end{equation}
for $\vc{x} \in [0, \infty)^d$ \citep{huang1992, dreeshuang1998}. Existence of the limit in \eqref{eq:stdf} is an assumption. The value of $\ell(\bm{x})$ is proportional to the probability that at least one of the $d$ components of $\vc{X}$ exceeds a high threshold. 
The relative magnitudes of the marginal exceedance probabilities are controlled through the vector $\bm{x}$. The function $\ell$ characterizes dependence in the asymptotic distribution theory of sample maxima and multivariate peaks-over-thresholds \citep{beirlant2006statistics, dehaanferreira2006}. The stdf is on equal footing with other objects in multivariate extreme value theory such as the spectral and exponent measures \citep{dehaan1977}, the Pickands dependence function \citep{pickands1981}, the exponent measure function \citep{coles1991}, and the tail copula \citep{schmidt2006}, all of which can be expressed in terms of one another through appropriate transformations.

Although the analysis of multivariate extremes is often performed within the context of a semiparametric model, underlying many (semi)parametric inference methods is the nonparametric estimator that arises if marginal and joint distribution functions in \eqref{eq:stdf} are replaced by their empirical counterparts. \cite{dreeshuang1998} and later \cite{einmahl2012m} and \cite{fougeres2015nr2} established central limit theorems for the resulting estimator, the empirical stable tail dependence function. The latter authors and \citet{beirlant2016} also introduced bias-corrected versions of the empirical stdf. In two dimensions, the stdf is connected via the inclusion--exclusion formula to the tail copula, of which \citet{schmidt2006} and \citet{bucher2013nr2} studied nonparametric estimators and their properties.

The stdf in \eqref{eq:stdf} may also be viewed in terms of the copula \citep{sklar1959fonctions} of $\bm{X}$,
\begin{equation}
\label{eq:C}
  C(\bm{u}) = \Pr\{ F_1(X_1) \le u_1, \ldots, F_d(X_d) \le u_d \},
  \qquad \bm{u} \in [0, 1]^d;
\end{equation}
recall that we assumed that $F_1, \ldots, F_d$ are continuous. Indeed, we have
\begin{equation}
\label{eq:C2stdf}
  \ell(\bm{x}) = \lim_{t \downarrow 0} t^{-1} \{ 1 - C(1 - t \bm{x}) \}.
\end{equation}
Any estimator for $C$ can thus be converted to an estimator for $\ell$. In fact, the already mentioned empirical stdf arises in exactly this way from the empirical copula \citep{deheuvels1979fonction}, which, up to a minor modification, is the sample analogue of \eqref{eq:C}, with empirical distribution functions replacing their population counterparts.

A drawback of the empirical copula is that it is a piecewise constant function, while the true copula $C$ is continuous. The same issue arises for the empirical versus the true stdf. Smoothing is an obvious remedy. Because the domain of a copula is the $d$-dimensional unit cube, \cite{sancetta2004bernstein} and \cite{janssen2012large} introduced and studied a smoothe based on Bernstein polynomials. For a particular choice of the degree of the polynomials, the resulting empirical Bernstein copula has an interesting interpretation in terms of uniform order statistics. Their distributions being beta, \citet{segers2017} coined this copula estimator the empirical beta copula. Not only did they establish a functional central limit theorem for that estimator under quite general conditions, they also showed that in finite samples, the empirical beta copula is more often than not more accurate than the original empirical copula.

In view of these findings, a natural question is whether the application of the beta smoother to the empirical stdf produces a similar small-sample performance improvement while preserving asymptotic properties. It is the aim of our paper to address this question and investigate further properties of this empirical beta stdf.

We introduce the empirical beta stdf in Section~\ref{sec:estimator} and we prove the weak convergence of the rescaled estimation error in Section~\ref{sec:theory} in a set-up that allows for quite general data generating processes and initial estimators. Further, we investigate the finite-sample performance of the empirical beta stdf in a simulation study reported in Section~\ref{sec:simul}. For all models and settings considered, we find that the empirical beta stdf has a lower integrated mean squared error than the empirical stdf. Finally, we propose in Section~\ref{sec:resampling} a novel resampling procedure that exploits the property that the empirical beta copula is itself a genuine copula and we find in simulations that it compares favorably with the direct multiplier method proposed in \citet{bucher2013nr2}. Some longer mathematical proofs are deferred to Appendix~\ref{sec:proofs}.

\section{Estimators}
\label{sec:estimator}

Let $\vc{X}_i= (X_{i1}, \ldots, X_{id})$, $i\in\{1, \ldots, n\}$, be a sample of $n$ observations of $d$ variables. Assume that within each variable, there are no ties. The rank of $X_{ij}$ among $X_{1j}, \ldots, X_{nj}$ is denoted by $R_{ij,n}=\sum_{t=1}^{n} \I \{ X_{tj} \leq X_{ij}\}$.

The (rank-based) empirical copula and the empirical beta copula are
\begin{align}
\label{eq:empbcop}
  \cop( \vc{u} ) 
  &= 
  \frac{1}{n} \sum_{i=1}^n \prod_{j=1}^d 
  \I \{ R_{ij,n} / n \leq u_j \}
  &\text{and}&&
  \copb( \vc{u} )
  &=
  \frac{1}{n} \sum_{i=1}^n \prod_{j=1}^d 
  F_{n,R_{ij,n}}(u_j),
\end{align}
respectively, where $\vc{u} \in [0, 1]^d$ and where
\begin{equation}
\label{eq:betacdf}
  F_{n,r}(u) 
  = \sum_{s=r}^n \binom{n}{s} u^s (1-u)^s,
  \qquad u \in [0, 1], \ r \in \{1, \ldots, n\},
\end{equation}
is the distribution function of the $\operatorname{Beta}(r,n-r+1)$ distribution, which is the law of $r$-th order statistic of an independent random sample of size $n$ drawn from the uniform distribution on $[0, 1]$. The above empirical copula is a popular variation on the original proposal by \citet{deheuvels1979fonction}, while the empirical beta copula \citep{segers2017} is a special case of the empirical Bernstein copula \citep{sancetta2004bernstein} if the degrees of all Bernstein polynomials are equal to the sample size.

In view of \eqref{eq:C2stdf}, the empirical stable tail dependence function and the (novel) empirical beta stable tail dependence function are defined as
\begin{align}
\label{eq:Cb2ellb}
  \ell_{n,k}( \vc{x} )
  &=
  \tfrac{n}{k} \{ 1 - \cop(1 - \tfrac{k}{n} \vc{x}) \},
  &\text{and}&&
  \ell_{n,k}^\beta( \vc{x} )
  &=
  \tfrac{n}{k} \{ 1 - \copb(1 - \tfrac{k}{n} \vc{x}) \},
\end{align}
respectively. Here $k = k_n \in (0, n]$ is a tuning parameter. Asymptotically, we will need to assume that $k \to \infty$ and $k/n \to 0$. Intuitively, $k/n$ is a `bandwidth' defining a neighbourhood around the upper right corner of the unit cube, while $k = n \cdot (k/n)$ is the order of magnitude of the expected number of data points in such a neighbourhood. In what follows, $\vc{x}$ takes values in a bounded set $[0, M]^d$ for some $M > 0$, and we assume that $k$ and $n$ are such that $\frac{k}{n} x_j \le 1$ for all $j \in \{1, \ldots, d\}$.

The expression of the empirical stable tail dependence function can be developed into
\begin{equation}
\label{eq:ellnk:ranks}
  \ell_{n,k}( \vc{x} )
  =
  \frac{1}{k} \sum_{i=1}^n 
  \I \left\{ 
    R_{i1,n} > n - k x_1\ \text{or $\ldots$ or}\ R_{id,n} > n - k x_d 
  \right\}.
\end{equation}
Replacing $n - k x_j$ by $n + 0.5 - k x_j$ or by $n + 1 - k x_j$ yields variations which are asymptotically equivalent but often more accurate in finite samples. In the bivariate case, replacing `or' by `and' yields the empirical tail copula \citep{schmidt2006}. The empirical stdf originates from the PhD thesis by \citet{huang1992} and, together with the tail copula, it has been studied an applied ever since. Examples include testing for the maximal domain of attraction condition \citep{einmahl2006} and detecting structural breaks in tail dependence for multivariate time series \citep{bucher2015}.

Unlike the empirical copula, the empirical beta copula is itself a copula, i.e., the cumulative distribution function of a random vector whose marginal distributions are uniform on $[0, 1]$. In contrast, the empirical beta stdf is in general not a valid stdf. For instance, it is not necessarily convex and it is not homogeneous; see \citet{ressel2013} for a complete characterization of the class of stdfs. Still, the empirical beta stdf is an improvement upon the ordinary empirical stdf in the sense that it respects the pointwise upper and lower bounds for stdfs and that it has the correct margins.

\begin{proposition}
For $\vc{x} \in [0, n/k]^d$, we have
\[
  \max(x_1, \ldots, x_d) \le \ell_{n,k}^{\beta}(\vc{x}) \le x_1+\cdots+x_d.
\]
In particular, if there exists $j$ such that $x_i = 0$ for all $i \ne j$, then $\ell_{n,k}^{\beta}(\vc{x}) = x_j$.
\end{proposition}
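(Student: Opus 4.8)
The plan is to substitute $\vc{u} = 1 - \tfrac{k}{n}\vc{x}$ and recognize the two desired inequalities as the Fr\'echet--Hoeffding bounds for the copula $\copb$. Since $\vc{x} \in [0, n/k]^d$, each $u_j = 1 - \tfrac{k}{n}x_j$ lies in $[0,1]$, so $\vc{u}$ is in the domain of $\copb$ and $x_j = \tfrac{n}{k}(1-u_j)$. Then $\ell_{n,k}^\beta(\vc{x}) = \tfrac{n}{k}\{1 - \copb(\vc{u})\}$, and an elementary rearrangement shows that $\ell_{n,k}^\beta(\vc{x}) \le x_1 + \cdots + x_d$ is equivalent to $\copb(\vc{u}) \ge u_1 + \cdots + u_d - (d-1)$, while $\ell_{n,k}^\beta(\vc{x}) \ge \max_j x_j$ is equivalent to $\copb(\vc{u}) \le \min_j u_j$. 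As noted in the text, $\copb$ is a genuine copula \citep{segers2017}, so both bounds hold automatically.

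To make the proof self-contained I would verify these two bounds directly, since each rests on a single identity together with an elementary inequality. The key identity is that $\copb$ has uniform margins, i.e.\ $\tfrac{1}{n}\sum_{i=1}^n F_{n,R_{ij,n}}(u_j) = u_j$ for each $j$. This holds because the absence of ties makes $i \mapsto R_{ij,n}$ a bijection of $\{1, \ldots, n\}$, so the left-hand side equals $\tfrac{1}{n}\sum_{r=1}^n F_{n,r}(u_j)$; and averaging the order-statistic distribution functions over $r$ returns the identity, since $\sum_{r=1}^n \I\{U_{(r)} \le u\} = \sum_{i=1}^n \I\{U_i \le u\}$ for uniform $U_1, \ldots, U_n$, whence $\tfrac{1}{n}\sum_{r=1}^n F_{n,r}(u) = u$ after taking expectations.

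For the lower bound on $\ell_{n,k}^\beta$ I would bound each product $\prod_{l=1}^d F_{n,R_{il,n}}(u_l)$ above by its single factor $F_{n,R_{ij,n}}(u_j)$ (all factors lie in $[0,1]$), average over $i$, and apply the margin identity to obtain $\copb(\vc{u}) \le u_j$; maximizing over $j$ gives $\ell_{n,k}^\beta(\vc{x}) \ge \max_j x_j$. For the upper bound I would use $\prod_{l=1}^d a_l \ge \sum_{l=1}^d a_l - (d-1)$ for $a_l \in [0,1]$ (equivalently $\prod_l(1-b_l) \ge 1 - \sum_l b_l$ with $b_l = 1 - a_l$), apply it inside the defining sum of $\copb$, and use the margin identity to get $\copb(\vc{u}) \ge \sum_j u_j - (d-1)$, which is the required upper bound on $\ell_{n,k}^\beta$.

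The ``in particular'' claim is then immediate: if $x_i = 0$ for all $i \ne j$, then $\max(x_1, \ldots, x_d) = x_j = x_1 + \cdots + x_d$, so the lower and upper bounds coincide and pin down $\ell_{n,k}^\beta(\vc{x}) = x_j$. I do not expect a genuine obstacle: the only nonroutine ingredient is the uniform-margin identity, and once it is in hand both bounds reduce to one-line inequalities for products of numbers in $[0,1]$, which is precisely why they reproduce the Fr\'echet--Hoeffding bounds.
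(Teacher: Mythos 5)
Your proposal is correct, and its first paragraph is precisely the paper's proof: substitute $\vc{u} = 1 - \tfrac{k}{n}\vc{x}$, invoke the fact that $\copb$ is a genuine copula \citep{segers2017}, and read off the two bounds $\copb(\vc{u}) \le \min_j u_j$ and $1 - \copb(\vc{u}) \le \sum_j (1-u_j)$. Where you depart from the paper is in the self-contained verification: the paper leans on the nontrivial result of \citet{segers2017} that $\copb$ is $d$-increasing with uniform margins (i.e.\ an honest distribution function, from which the Fr\'echet--Hoeffding-type bounds follow probabilistically), whereas your direct argument never needs $\copb$ to be a distribution function at all. You only use two elementary facts about the defining sum: the uniform-margins identity $\tfrac{1}{n}\sum_{i} F_{n,R_{ij,n}}(u_j) = u_j$ (via the rank bijection and $\sum_{r=1}^n F_{n,r}(u) = \EE[\operatorname{Bin}(n,u)] = nu$), and the pointwise inequalities $\prod_l a_l \le a_j$ and $\prod_l a_l \ge \sum_l a_l - (d-1)$ for $a_l \in [0,1]$, both checked correctly. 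What each approach buys: the paper's citation gives a two-line proof but imports the hardest part of the empirical-beta-copula theory as a black box; your version is longer but stands on its own and makes transparent that the bounds in this proposition require far less than the full copula property.
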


\begin{proof}
Since $\copb$ is itself the cumulative distribution function of a vector of random variables that are uniformly distributed on $[0, 1]$, we have, for $\vc{u} \in [0, 1]^d$,
\begin{align*}
  \copb( \vc{u} ) &\le \min(u_1, \ldots, u_d)
  &\text{and}&&
  1 - \copb( \vc{u} ) &\le (1-u_1) + \cdots + (1-u_d).
\end{align*}
Plugging these inequalities into \eqref{eq:Cb2ellb} with $\vc{u} = 1 - \frac{k}{n} \vc{x}$ yields the proposition.
\end{proof}

We now provide a representation of the empirical beta copula and empirical beta stdf as a mixture over the empirical copula and empirical stdf, respectively, evaluated over a grid of points. Let $\vc{T} = (T_1, \ldots, T_d)$ be a vector of independent binomial random variables such that $T_j \sim \operatorname{Bin}(n, \frac{k}{n}x_j)$ for $j \in \{1, \ldots, d\}$. Recall $F_{n,r}$ in \eqref{eq:betacdf}. With some abuse of notation, we have
\[
  F_{n,r}(1 - \tfrac{k}{n}x)
  =
  \PP[ \operatorname{Bin}(n, 1 - \tfrac{k}{n}x) \ge r ]
  =
  \PP[ \operatorname{Bin}(n, \tfrac{k}{n}x) \le n - r ].
\]
Then by \eqref{eq:empbcop},
\begin{align*}
  \copb(1 - \tfrac{k}{n} \vc{x})
  &=
  \frac{1}{n} \sum_{i=1}^n \prod_{j=1}^d \PP_{\vc{T}}[ T_j \le n - R_{ij}^{(n)} ] \\
  &=
  \EE_{\vc{T}} \biggl[
    \frac{1}{n} \sum_{i=1}^n \prod_{j=1}^d \I \{ T_j \le n - R_{ij}^{(n)} \}
  \biggr] 
  =
  \EE_{\vc{T}}[ \cop( 1 - \vc{T}/n ) ],
\end{align*}
where the notation $\PP_{\vc{T}}$ and $\EE_{\vc{T}}$ means that we compute probabilities and expectations with respect to $\vc{T}$ only. We obtain
\begin{equation}
\label{eq:ellbeta:T}
  \ell_{n,k}^{\beta}( \vc{x} )
  =
  \EE_{\vc{T}}[ \tfrac{n}{k} \{ 1 - \cop( 1 - \vc{T}/n ) \} ]
  =
  \EE_{\vc{T}}[ \ell_{n,k}( \vc{T}/k )].
\end{equation}
Because the notation involving $\vc{T}$ may lead to confusion later on, we let $\nu_{n,k,\vc{x}}$ denote the joint distribution of the random vector $\vc{T}/k$. From \eqref{eq:ellbeta:T}, we obtain the formula
\begin{equation}
\label{eq:ellbeta:nu}
  \ell_{n,k}^{\beta}( \vc{x} )
  =
  \int_{[0, n/k]^d} \ell_{n,k}( \vc{y} ) \diff \nu_{n,k,\vc{x}}(\vc{y}),
  \qquad \vc{x} \in [0, n/k]^d,
\end{equation}
which will be the basis of the asymptotic theory. It also follows that the estimator $\ell_{n,k}^{\beta}$ can be viewed as a smoothed version of $\ell_{n,k}$, with a bandwidth of the order $k^{-1/2}$.

\section{Weak convergence}
\label{sec:theory}

Let $\ell$ be the true stdf, to be estimated. Let $k = k_n \in (0, n]$ be such that $k \to \infty$ and $k/n \to 0$ as $n \to \infty$. Assume we are given a sequence $\hat{\ell}_{n,k}$ of estimators of $\ell$ defined on $[0, n/k]^d$. Here, we make abstraction of the data generating process (independent random sampling, a finite stretch of a stationary time series, \ldots) and of the precise definition of $\hat{\ell}_{n,k}$. The latter could be equal to the empirical stdf in \eqref{eq:Cb2ellb} but also to any variants of it, such as the bias-reduced estimators in \citet{fougeres2015nr2} and \citet{beirlant2016}. We apply the beta smoother $\nu_{n,k,\vc{x}}$ in \eqref{eq:ellbeta:nu} to $\hat{\ell}_{n,k}$, obtaining 
\begin{equation}
\label{eq:hatellbeta}
  \hat{\ell}_{n,k}^\beta( \vc{x} )
  =
  \int_{[0, n/k]^d} \hat{\ell}_{n,k}( \vc{y} ) \diff \nu_{n,k,\vc{x}}(\vc{y}),
  \qquad \vc{x} \in [0, n/k]^d,
\end{equation}
Our aim is to find the asymptotic distribution as $n \to \infty$ of 
\[
  B_{n,k}^\beta = \sqrt{k} ( \hat{\ell}_{n,k}^\beta - \ell ).
\]
Required are smoothness of $\ell$, a growth condition on $\hat{\ell}_{n,k}$, and weak convergence of
\[
  B_{n,k} = \sqrt{k}( \hat{\ell}_{n,k} - \ell ).
\]
We will find that the difference between $B_{n,k}$ and $B_{n,k}^\beta$ is asymptotically negligeable, so that both converge weakly to the same limit.



\begin{condition}
\label{cond:stdf}
For every $j \in \{1, \ldots, d\}$, the function $\ell$ has a continuous first-order partial derivative $\dot{\ell}_j = \partial \ell/ \partial x_j$ on the set $\{ \vc{x} \in [0, \infty)^d : x_j > 0 \}$.
\end{condition}

Condition~\ref{cond:stdf} is satisfied by the members of many parametric families of stdfs. Examples include the logistic model and extensions thereof \citep{tawn1990} and families derived from Gaussian max-stable processes \citep{genton2011, huser2013}. Notable exceptions are stdfs derived from max-linear models \citep{einmahl2012m, gissibl2015, einmahl2016nr2}. See Section~\ref{sec:simul} for specific examples.

\begin{condition}
\label{cond:growth}
We have $\sup_{\vc{y} \in [0, n/k]^d} \lvert \hat{\ell}_{n,k}(\vc{y}) \rvert = \Oh_{\PP}(n/k)$ as $n \to \infty$.
\end{condition}

For $\hat{\ell}_{n,k}$ equal to the empirical stdf $\ell_{n,k}$ in \eqref{eq:Cb2ellb}, Condition~\ref{cond:growth} is trivially satisfied. Condition~\ref{cond:growth} is also satisfied for the bias-corrected estimators in \citet{fougeres2015nr2} and \citet{beirlant2016}: these estimators are based on
\[
  \ell_{n,k;a}( \vc{x} ) = a^{-1} \, \ell_{n,k}(a \vc{x})
\]
for $0 < a \le 1$, and by \eqref{eq:ellnk:ranks}, we have $\lvert \ell_{n,k;a}(\vc{x}) \rvert \le d (x_1 + \cdots + x_d)$ for all $\vc{x} \in [0, \infty)^d$.

For a set $\mathbb{T}$, let 
$\ell^\infty( \mathbb{T} )$ be the Banach space of bounded functions $f : \mathbb{T} \to \RR$, the space being equipped with the supremum norm, $\lVert f \rVert_\infty = \sup \{ \lvert f(\vc{x}) \rvert : \vc{x} \in \mathbb{T} \}$. Weak convergence in $\ell^\infty( \mathbb{T} )$ is denoted by the arrow $\leadsto$ and is to be understood as in \citet[pages~16--28]{van1996weak}.


\begin{condition}
\label{cond:B}
There exists $\delta > 0$ and a stochastic process $B$ on $[0, 1+\delta]^d$ with continuous trajectories such that $B_{n,k} \leadsto B$ as $n \to \infty$ in $\ell^\infty([0, 1+\delta]^d)$.
\end{condition}

For the empirical stdf $\ell_{n,k}$ in \eqref{eq:Cb2ellb}, \citet[Theorem~3.6]{einmahl2012m} showed that $\sqrt{k}(\ell_{n,k} - \ell)$ converges weakly to a Gaussian process with continuous trajectories if the observations $\vc{X}_i$ are sampled independently from a common, continuous distribution function $F$ with stdf $\ell$. All that is required is that $\ell$ is continuously differentiable in the sense of Condition~\ref{cond:stdf} and that the sequence $k = k_n$ grows at a rate that is compatible with the speed at which the limit in \eqref{eq:stdf} is attained. Assuming an additional second-order refinement of \eqref{eq:stdf}, \citet{fougeres2015nr2} and \citet{beirlant2016} proved the weak convergence of the normalized estimation error of their bias-corrected estimators.


\begin{theorem}
\label{thm:weak}
If $k = k_n \in (0, n]$ is such that $\log(n) = \oh(k)$ and $k = \oh(n)$ as $n \to \infty$ and if Conditions~\ref{cond:stdf}, \ref{cond:growth} and~\ref{cond:B} hold, then, in the space $\ell^\infty([0, 1]^d)$, we have
\[
  \sqrt{k} ( \hat{\ell}_{n,k}^{\beta} - \ell )
  =
  \sqrt{k} ( \hat{\ell}_{n,k} - \ell ) + \oh_{\PP}(1)
  \leadsto
  B,
  \qquad n \to \infty.
\]
\end{theorem}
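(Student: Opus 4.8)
The plan is to work pointwise in $\vc{x} \in [0,1]^d$ and exploit that the smoothing measure $\nu_{n,k,\vc{x}}$ is the law of $\vc{T}/k$, a rescaled vector of independent binomials with mean exactly $\vc{x}$ (since $\EE[T_j/k] = x_j$) and coordinate variances $\Var(T_j/k) = x_j(1 - \tfrac{k}{n}x_j)/k \le x_j/k$. Because $\nu_{n,k,\vc{x}}$ is a probability measure integrating to $1$ and $\int \vc{y} \diff \nu_{n,k,\vc{x}}(\vc{y}) = \vc{x}$, I would first decompose, for every $\vc{x} \in [0,1]^d$,
\[
  \sqrt{k}\bigl( \hat{\ell}_{n,k}^{\beta}(\vc{x}) - \ell(\vc{x}) \bigr)
  =
  \underbrace{\int B_{n,k}(\vc{y}) \diff \nu_{n,k,\vc{x}}(\vc{y})}_{=: S_n(\vc{x})}
  + \underbrace{\sqrt{k} \int \bigl( \ell(\vc{y}) - \ell(\vc{x}) \bigr) \diff \nu_{n,k,\vc{x}}(\vc{y})}_{=: R_n(\vc{x})}.
\]
The goal is then to prove $\sup_{\vc{x} \in [0,1]^d} \lvert S_n(\vc{x}) - B_{n,k}(\vc{x}) \rvert = \oh_{\PP}(1)$ and $\sup_{\vc{x} \in [0,1]^d} \lvert R_n(\vc{x}) \rvert = \oh(1)$, after which Condition~\ref{cond:B} (restricted to $[0,1]^d \subseteq [0,1+\delta]^d$) together with Slutsky's lemma yields the claim.

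For the stochastic term $S_n$, the key input is the concentration of $\nu_{n,k,\vc{x}}$ at $\vc{x}$: a Bernstein bound for binomials gives, uniformly in $\vc{x} \in [0,1]^d$, a constant $c > 0$ with $\nu_{n,k,\vc{x}}(\{ \lvert \vc{y} - \vc{x} \rvert > \eps \}) \le 2d\, e^{-ck\eps^2}$. I would split the integral at $\lvert \vc{y} - \vc{x} \rvert = \eps$ (with $\eps < \delta$). On $\{ \lvert \vc{y} - \vc{x} \rvert \le \eps \}$, the difference $\lvert B_{n,k}(\vc{y}) - B_{n,k}(\vc{x}) \rvert$ is controlled by the modulus of continuity of $B_{n,k}$ on $[0,1+\delta]^d$, which is asymptotically small in probability by the asymptotic equicontinuity implied by Condition~\ref{cond:B}; here the $\delta$-enlargement is exactly what keeps the nearby points $\vc{y}$ inside the domain of weak convergence. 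On the far set, I would bound $\lvert B_{n,k}(\vc{y}) \rvert \le \sqrt{k}( \sup_{\vc{y}} \lvert \hat{\ell}_{n,k}(\vc{y}) \rvert + \sup_{\vc{y}} \lvert \ell(\vc{y}) \rvert ) = \Oh_{\PP}(n/\sqrt{k})$, using Condition~\ref{cond:growth} and $\ell(\vc{y}) \le d\,n/k$ on $[0,n/k]^d$, and multiply by the exponentially small mass. This far part is $\Oh_{\PP}\bigl( (n/\sqrt{k})\, e^{-ck\eps^2} \bigr)$, which tends to $0$ precisely because $\log(n) = \oh(k)$; the residual term $\lvert B_{n,k}(\vc{x}) \rvert \cdot \nu_{n,k,\vc{x}}(\{\lvert \vc{y}-\vc{x}\rvert>\eps\})$ is $\Oh_{\PP}(1) \cdot e^{-ck\eps^2}$. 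Letting $n \to \infty$ and then $\eps \to 0$ gives $\sup_{\vc{x}} \lvert S_n - B_{n,k} \rvert = \oh_{\PP}(1)$.

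The bias term $R_n$ is where I expect the \emph{main obstacle}, since $\ell$ is only assumed $C^1$ (Condition~\ref{cond:stdf}), so no second-order Taylor expansion is available and the partials may be discontinuous on the axes. I would use the first-order expansion with integral remainder, $\ell(\vc{y}) - \ell(\vc{x}) = \sum_j (y_j - x_j) \int_0^1 \dot{\ell}_j(\vc{x} + s(\vc{y}-\vc{x})) \diff s$; the linear part $\sum_j \dot{\ell}_j(\vc{x})(y_j - x_j)$ integrates to zero because $\nu_{n,k,\vc{x}}$ has mean $\vc{x}$, leaving $R_n(\vc{x}) = \sqrt{k} \int \sum_j (y_j - x_j)\, g_j(\vc{x},\vc{y}) \diff \nu_{n,k,\vc{x}}(\vc{y})$ with $g_j(\vc{x},\vc{y}) = \int_0^1 [\dot{\ell}_j(\vc{x} + s(\vc{y}-\vc{x})) - \dot{\ell}_j(\vc{x})] \diff s$. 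Applying Cauchy--Schwarz in each coordinate and using $\Var(y_j) \le x_j/k$ gives the bound $\lvert R_n(\vc{x}) \rvert \le \sum_j \sqrt{x_j}\, \bigl( \int g_j^2 \diff \nu_{n,k,\vc{x}} \bigr)^{1/2}$. The gained factor $\sqrt{x_j}$ is what tames the boundary: since the partials are bounded (so $\lvert g_j \rvert \le 2$), coordinates with $x_j \le \eps^2$ contribute at most $2\eps$; for $x_j > \eps^2$ the relevant segment stays in $\{ y_j > 0 \}$, where Condition~\ref{cond:stdf} provides uniform continuity of $\dot{\ell}_j$, so that splitting $\int g_j^2 \diff \nu_{n,k,\vc{x}}$ at $\lvert \vc{y}-\vc{x}\rvert \le \rho$ and using concentration on the complement forces $\int g_j^2 \diff \nu_{n,k,\vc{x}} \to 0$ as $n \to \infty$ and then $\rho \to 0$. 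Taking $n \to \infty$, then $\rho \to 0$, then $\eps \to 0$ yields $\sup_{\vc{x}} \lvert R_n \rvert = \oh(1)$, completing the argument.
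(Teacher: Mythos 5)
Your proposal is correct and takes essentially the same route as the paper: your decomposition into $S_n$ and $R_n$ is exactly the paper's splitting, your argument for $S_n$ (asymptotic equicontinuity from Condition~\ref{cond:B} on the near set, plus the growth bound and exponential binomial concentration with $\log(n) = \oh(k)$ on the far set) reproduces Proposition~\ref{prop:stoch}, and your argument for $R_n$ (first-order expansion with integral remainder, centering by the binomial mean, Cauchy--Schwarz giving the $\sqrt{x_j}$ factor, and the split between small and large $x_j$ to tame the boundary) reproduces Proposition~\ref{prop:bias}. The only deviations are cosmetic: Bernstein's in place of Bennett's inequality, and applying Cauchy--Schwarz before rather than after the near/far split.
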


\begin{proof}
Recall the integral representation for $\hat{\ell}_{n,k}^{\beta}$ in \eqref{eq:hatellbeta}. Since $\nu_{n,k,\vc{x}}$ is a probability measure, we have
\begin{align}
\nonumber
  B_{n,k}^{\beta}(\vc{x})
  &=
  \sqrt{k} \left\{ \int_{[0, n/k]^d} \hat{\ell}_{n,k}( \vc{y} ) \diff \nu_{n,k,\vc{x}}(\vc{y}) - \ell(\vc{x}) \right\} \\
\nonumber
  &=
  \int_{[0, n/k]^d} \sqrt{k} \{ \hat{\ell}_{n,k}( \vc{y} ) - \ell(\vc{y}) \} \diff \nu_{n,k,\vc{x}}(\vc{y})
  +
  \sqrt{k} \left\{ \int_{[0, n/k]^d} \ell( \vc{y} ) \diff \nu_{n,k,\vc{x}}(\vc{y}) - \ell(\vc{x}) \right\} \\
\label{eq:Bdecomp}
  &=
  \int_{[0, n/k]^d} B_{n,k}(\vc{y}) \, \diff \nu_{n,k,\vc{x}}(\vc{y})
  +
  \int_{[0, n/k]^d} \sqrt{k} \{ \ell( \vc{y} ) - \ell(\vc{x}) \} \diff \nu_{n,k,\vc{x}}(\vc{y}).
\end{align}
The conclusion of the theorem now follows from Propositions~\ref{prop:stoch} and~\ref{prop:bias} below:
\begin{itemize}
\item 
The first integral in \eqref{eq:Bdecomp} is equal to $B_{n,k}(\vc{x}) + \oh_{\mathbb{P}}(1)$ as $n \to \infty$ by Proposition~\ref{prop:stoch}.
\item 
The second integral in \eqref{eq:Bdecomp} converges to zero uniformly in $\vc{x} \in [0, 1]^d$ by Proposition~\ref{prop:bias} with $f = \ell$. To apply that proposition, we need to verify that the partial derivatives of $\ell$ are uniformly bounded (by one). But this is a consequence of the Lipschitz property 
$\lvert \ell(\vc{y}) - \ell(\vc{x}) \rvert \le \sum_{j=1}^d \lvert y_j - x_j \rvert$ for $\vc{x}, \vc{y} \in [0, \infty)^d$, which is in turn a consequence of \eqref{eq:C2stdf} and the same Lipschitz property for the copula $C$.
\end{itemize}
This concludes the proof of Theorem~\ref{thm:weak}.
\end{proof}

The following two propositions are instrumental in the proof of Theorem~\ref{thm:weak}. Their proofs are given in Appendix~\ref{sec:proofs}. 

\begin{proposition}
\label{prop:stoch}
If $k = k_n \in (0, n]$ is such that $\log(n) = \oh(k)$ and $k = \oh(n)$ as $n \to \infty$ and if Conditions~\ref{cond:growth} and~\ref{cond:B} hold, then
\begin{equation}
\label{eq:BnknuB}
  \sup_{\vc{x} \in [0, 1]^d} 
  \left\lvert
    \int_{[0, n/k]^d}
      B_{n,k}( \vc{y} )
    \diff \nu_{n, k, \vc{x}}( \vc{y} )
    -
    B_{n,k}( \vc{x} )
  \right\rvert
  = 
  \oh_{\PP}(1),
  \qquad n \to \infty.
\end{equation}
\end{proposition}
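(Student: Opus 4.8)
The plan is to use that $\nu_{n,k,\vc{x}}$ is a probability measure to rewrite the quantity inside the supremum in \eqref{eq:BnknuB} as the increment
\[
  \int_{[0, n/k]^d} \{ B_{n,k}(\vc{y}) - B_{n,k}(\vc{x}) \} \diff \nu_{n,k,\vc{x}}(\vc{y}),
\]
and then to bound this by partitioning the domain according to the distance of $\vc{y}$ from $\vc{x}$. The key structural fact is that $\nu_{n,k,\vc{x}}$ is the law of $\vc{T}/k$, a product of independent scaled binomials with $\EE[T_j/k] = x_j$ and $\Var(T_j/k) = \tfrac{x_j}{k}(1 - \tfrac{k}{n}x_j) \le \tfrac{1}{k}$ for $\vc{x} \in [0,1]^d$; hence it concentrates around $\vc{x}$ at scale $k^{-1/2} \to 0$. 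Fixing a small $\eta \in (0, \delta)$, I would split $[0,n/k]^d$ into three regions: (A) the ball $\{\lVert\vc{y} - \vc{x}\rVert \le \eta\}$, which for $\vc{x}\in[0,1]^d$ lies inside $[0,1+\delta]^d$; (B) the set $\{\lVert\vc{y} - \vc{x}\rVert > \eta\}\cap[0,1+\delta]^d$; and (C) the complement $[0,n/k]^d \setminus [0,1+\delta]^d$.

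On region (A) I would invoke asymptotic equicontinuity. By Condition~\ref{cond:B}, $B_{n,k} \leadsto B$ in $\ell^\infty([0,1+\delta]^d)$ with continuous limit, so the sequence is asymptotically tight and therefore asymptotically uniformly equicontinuous in probability with respect to the Euclidean metric. Thus for every $\eps > 0$ one can choose $\eta$ small enough that $\sup_{\lVert\vc{y}-\vc{x}\rVert\le\eta}\lvert B_{n,k}(\vc{y}) - B_{n,k}(\vc{x})\rvert$ (which uniformly dominates the region-(A) integrand, the measure being a probability) stays below $\eps$ with probability close to one for large $n$, uniformly in $\vc{x}$. On region (B) the integrand is crudely bounded by $2\sup_{[0,1+\delta]^d}\lvert B_{n,k}\rvert = \Oh_{\PP}(1)$, while Chebyshev's inequality and the variance bound give $\sup_{\vc{x}\in[0,1]^d}\nu_{n,k,\vc{x}}(\lVert\vc{y}-\vc{x}\rVert>\eta) \le d^2/(k\eta^2) \to 0$; the product is $\oh_{\PP}(1)$ uniformly in $\vc{x}$.

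Region (C) is the crux, because there $\vc{y}$ leaves the set on which weak convergence holds, so only the growth Condition~\ref{cond:growth} is available. Using homogeneity of $\ell$ (whence $\lvert\ell(\vc{y})\rvert \le d\,n/k$ on $[0,n/k]^d$) together with Condition~\ref{cond:growth}, I obtain $\sup_{\vc{y}\in[0,n/k]^d}\lvert B_{n,k}(\vc{y})\rvert = \Oh_{\PP}(n/\sqrt{k})$, a factor that may diverge. To beat it, the $\nu_{n,k,\vc{x}}$-mass of region (C) must be exponentially small. For $\vc{x}\in[0,1]^d$, any $\vc{y}\notin[0,1+\delta]^d$ has a coordinate with $y_j > 1+\delta$, i.e.\ $T_j > k(1+\delta)$ whereas $\EE[T_j] = kx_j \le k$, so the deviation exceeds $k\delta$; Bernstein's inequality (variance $\le k$) yields $\PP[T_j \ge k(1+\delta)] \le \exp(-c_\delta k)$ uniformly in $x_j\in[0,1]$ for some $c_\delta>0$, and a union bound gives $\sup_{\vc{x}}\nu_{n,k,\vc{x}}(\text{C}) \le d\exp(-c_\delta k)$. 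The region-(C) contribution is therefore $\Oh_{\PP}(n/\sqrt{k})\cdot d\exp(-c_\delta k)$, and this is exactly where the hypothesis $\log(n) = \oh(k)$ is decisive: it forces $n\exp(-c_\delta k) = \exp(\oh(k) - c_\delta k) \to 0$, so the term is $\oh_{\PP}(1)$.

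Combining the three estimates and letting $\eps,\eta \downarrow 0$ through the standard equicontinuity formulation then yields \eqref{eq:BnknuB}. I expect the main obstacle to be precisely this balance in region (C) between the possibly unbounded growth of $B_{n,k}$ far from $\vc{x}$, controlled only crudely by Condition~\ref{cond:growth}, and the exponential concentration of the binomial smoother quantified by Bernstein's inequality; everything on regions (A) and (B) is a routine equicontinuity-plus-Chebyshev argument.
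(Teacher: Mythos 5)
Your proof is correct and follows essentially the same route as the paper's: bring $B_{n,k}(\vc{x})$ inside the integral, control the region near $\vc{x}$ by asymptotic equicontinuity of $B_{n,k}$ on $[0,1+\delta]^d$ (from Condition~\ref{cond:B}), and beat the $\Oh_{\PP}(n/\sqrt{k})$ growth far from $\vc{x}$ (from Condition~\ref{cond:growth} and the stdf bound on $\ell$) by exponential binomial concentration together with $\log(n) = \oh(k)$. The only differences are cosmetic: the paper uses a two-region split in the sup-norm with Bennett's inequality covering the whole far region, whereas you use a three-region split (Chebyshev on the intermediate region inside $[0,1+\delta]^d$, Bernstein's inequality only on the exterior), which is equally valid.
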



\begin{proposition}
\label{prop:bias}
If $f : [0, \infty)^d \to \RR$ is continuous and if for each $j \in \{1, \ldots, d\}$, its first-order partial derivative $\dot{f}_j$ exists and is continuous and uniformly bounded on the set $\{ \vc{x} \in [0, \infty)^d : x_j > 0 \}$, then, for $k = k_n \in (0, n]$ such that $k \to \infty$ and $k/n \to 0$ as $n \to \infty$, we have, for every $M > 0$,
\[
  \lim_{n \to \infty}
  \sup_{\vc{x} \in [0, M]^d}
  \left\lvert
    \int_{[0, n/k]^d} \sqrt{k} \{ f( \vc{y} ) - f( \vc{x} ) \} \diff \nu_{n,k,\vc{x}}(\vc{y})
  \right\rvert
  = 0.
\]
\end{proposition}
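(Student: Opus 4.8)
The plan is to recast the integral as an expectation and exploit the fact that, under $\nu_{n,k,\vc x}$, the random vector $\vc Y = \vc T/k$ has mean exactly $\vc x$ and coordinatewise variance of order $1/k$, uniformly over $\vc x \in [0,M]^d$. Writing $\int g \diff\nu_{n,k,\vc x} = \EE_{\vc T}[g(\vc Y)]$ with the $T_j \sim \operatorname{Bin}(n,\tfrac kn x_j)$ independent, one has $\EE[Y_j] = x_j$ and $\Var(Y_j) = x_j(1-\tfrac kn x_j)/k \le M/k$. Coordinates with $x_j = 0$ satisfy $Y_j = 0 = x_j$ almost surely, hence are frozen and require no derivative. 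Because the first-order term is centred, $\EE[\sum_{j:x_j>0}\dot f_j(\vc x)(Y_j - x_j)] = 0$, so the quantity to bound equals $\sqrt k\,\EE[R]$ with $R = f(\vc Y) - f(\vc x) - \sum_{j:x_j>0}\dot f_j(\vc x)(Y_j - x_j)$. Setting $L = \max_j \sup_{x_j>0}\lvert\dot f_j\rvert$ and using the integral representation $f(\vc Y) - f(\vc x) = \sum_j (Y_j - x_j)\int_0^1 \dot f_j(\gamma_t)\diff t$ along $\gamma_t = (1-t)\vc x + t\vc Y$ (valid since the open segment keeps each active coordinate $(1-t)x_j + tY_j \ge (1-t)x_j > 0$), I obtain both the crude Lipschitz bound $\lvert R\rvert \le 2L\lVert \vc Y - \vc x\rVert_1$ and the refined form $R = \sum_j (Y_j - x_j)\int_0^1 [\dot f_j(\gamma_t) - \dot f_j(\vc x)]\diff t$.

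Next I would split the expectation over $\{\lVert\vc Y - \vc x\rVert_2 > \eta\}$ and its complement, for small $\eta > 0$. On the far event the crude bound with Cauchy--Schwarz and Chebyshev gives $\sqrt k\,\EE[\lvert R\rvert\,\I\{\lVert\vc Y-\vc x\rVert_2>\eta\}] \le 2L\sqrt k\,(\EE\lVert\vc Y-\vc x\rVert_1^2)^{1/2}\,\PP(\lVert\vc Y-\vc x\rVert_2>\eta)^{1/2}$, and since $\EE\lVert\vc Y-\vc x\rVert_1^2 \le d^2 M/k$ and $\PP(\cdots) \le d^2 M/(k\eta^2)$, this is $\Oh(1/(\eta\sqrt k))$, tending to $0$ as $n\to\infty$ uniformly in $\vc x$ for each fixed $\eta$.

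The crux is the near event, and the main obstacle is that $\dot f_j$ is controlled only on $\{x_j>0\}$ and may fail to be uniformly continuous as $x_j \downarrow 0$. I would resolve this by splitting the coordinates (for a fixed threshold $\tau \in (0,M]$) into $\{j : x_j \le \tau\}$ and $\{j : x_j > \tau\}$ and bounding the refined remainder term by term. For a small coordinate ($x_j \le \tau$) the crude per-coordinate estimate already suffices: $\sqrt k\,\EE[2L\lvert Y_j - x_j\rvert] \le 2L\sqrt k\,\Var(Y_j)^{1/2} \le 2L\sqrt{x_j} \le 2L\sqrt\tau$, so the total small-coordinate contribution is at most $2Ld\sqrt\tau$, uniformly. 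For a large coordinate ($x_j > \tau$), choosing $\eta < \tau/2$ forces, on the near event, $Y_j > \tau/2$, so the whole segment $\gamma_t$ has $j$-th coordinate $\ge \min(x_j,Y_j) > \tau/2$ and lies in the compact set $K_{j,\tau} = \{\vc z \in [0,M+1]^d : z_j \ge \tau/2\} \subset \{z_j > 0\}$; there $\dot f_j$ is uniformly continuous with some modulus $\omega_{j,\tau}$, whence $\lvert\dot f_j(\gamma_t) - \dot f_j(\vc x)\rvert \le \omega_{j,\tau}(\eta)$ and the large-coordinate contribution is at most $\sqrt M \sum_j \omega_{j,\tau}(\eta)$.

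Finally I would combine the three estimates: for each fixed $\eta < \tau/2$, $\limsup_{n\to\infty}\sup_{\vc x\in[0,M]^d}\lvert\cdots\rvert \le 2Ld\sqrt\tau + \sqrt M\sum_j \omega_{j,\tau}(\eta)$; then let $\eta \downarrow 0$ (the moduli vanish because $K_{j,\tau}$ is a fixed compact subset of $\{z_j>0\}$), and lastly let $\tau \downarrow 0$, which yields the claimed limit $0$. The delicate point throughout is reconciling the uncontrolled boundary behaviour of $\dot f_j$ with the smallness of the binomial fluctuations, and this is exactly what the per-coordinate bound $\sqrt k\,\EE\lvert Y_j - x_j\rvert \le \sqrt{x_j} \le \sqrt\tau$ takes care of.
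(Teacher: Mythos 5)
Your proposal is correct and follows essentially the same route as the paper's own proof: freeze the zero coordinates, use the fundamental theorem of calculus plus the centering $\EE[Y_j]=x_j$ to reduce to a remainder involving $\dot f_j(\gamma_t)-\dot f_j(\vc{x})$, dispose of the far event via second-moment (Chebyshev-type) bounds of order $1/(\eps\sqrt{k})$, and on the near event split coordinates at a threshold $\tau$, using the bound $\sqrt{k}\,\EE\lvert Y_j - x_j\rvert \le \sqrt{x_j}$ for small coordinates and uniform continuity of $\dot f_j$ on a compact set bounded away from $\{z_j=0\}$ for large ones. The only differences are cosmetic (parameter names, Cauchy--Schwarz plus Chebyshev instead of the paper's direct $\lvert y_j-x_j\rvert\I\{\lvert\vc{y}-\vc{x}\rvert>\eps\}\le\lvert\vc{y}-\vc{x}\rvert^2/\eps$ trick, and the harmless enlargement $[0,M+1]^d$, which is fine once $\eta\le 1$).
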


\begin{remark}
If $\ell$ is not continuously differentiable in the sense of Condition~\ref{cond:stdf}, then \citet{buecher2014nr2} proved that under independent random sampling from a continuous distribution with stdf $\ell$, the sequence of processes $\sqrt{k}(\ell_{n,k}-\ell)$ based upon the empirical stdf $\ell_{n,k}$ still converges weakly in the hypi-topology. The latter topology is weaker than the topology induced by the supremum norm but still stronger than the one induced by the $L^p$ seminorm for $1 \le p < \infty$. An interesting question is whether $\sqrt{k}(\ell_{n,k}^\beta-\ell)$ converges in the hypi-topology too. If true, then functions $\ell$ belonging to max-linear models would be covered as well.
\end{remark}

\begin{remark}
\cite{einmahl2006} proved weak convergence of the (bivariate) empirical stdf with respect to the topology of a weighted supremum norm, providing more information on the estimator in a neighbourhood of the origin. Similarly, \citet{berbucvol17} and \citet{berghaus2017} established weighted weak convergence for the empirical copula and empirical beta copula, respectively. It is an open problem whether such weighted weak convergence also holds for the empirical stdf and empirical beta stdf in arbitrary dimensions.
\end{remark}

\section{Finite-sample performance}
\label{sec:simul}

For independent random sampling, we compared the finite-sample performance of the empirical stdf $\ell_{n,k}$ defined as in \eqref{eq:ellnk:ranks} with $n - kx_j$ replaced by $n + 0.5 - kx_j$ on the one hand and the empirical beta stdf $\ell_{n,k}^{\beta}$ in \eqref{eq:Cb2ellb} defined via the empirical beta copula on the other hand. The simulations were performed with the help of \textsf{R} \citep{Rlanguage}, in particular the packages \textsf{copula} \citep{KojYan10R} and \textsf{SpatialExtremes} \citep{ribatet2017}.


For a given stdf $\ell$, the pseudo-random numbers were sampled independently from the following $d$-variate max-stable distribution function:
\begin{equation*}
  F( \vc{z} ) = \exp \{ - \ell(1/z_1, \ldots, 1/z_d) \}, \qquad \vc{z} \in (0, \infty)^d.
\end{equation*}
The margins of $F$ are unit-Fr\'echet, $F_j(z_j) = \exp(-1/z_j)$ for $z_j \in (0, \infty)$, but actually this choice is immaterial, since both $\ell_{n,k}$ and $\ell_{n,k}^{\beta}$ are functions of the ranks $R_{ij,n}$ only and hence are invariant with respect to increasing transformations of the variables. 

We considered the following max-stable models, in dimensions $d \in \{2, 3, 4\}$:
\begin{itemize}
\item The logistic model in $d = 2$ with $\theta = 0.7$:
\begin{equation}
\label{eq:logistic}
  \ell(x_1, x_2) = (x_1^{1/\theta}+x_2^{1/\theta})^{\theta}.
\end{equation}
\item The max-linear model in $d = 3$ with $2$ factors and parameter vector $\vc{\theta} = (0.3,0.5,0.9)$:
\[
  \ell(x_1, x_2, x_3) 
  = 
  \max \{ \theta_1 x_1, \theta_2 x_2, \theta_3 x_3 \}
  + 
  \max \{ (1-\theta_1) x_1, (1-\theta_2) x_2, (1-\theta_3) x_3 \}.
\]
This function is piece-wise linear and therefore does not satisfy Condition~\ref{cond:stdf}.
\item The law of the Brown--Resnick process on a $2 \times 2$ unit distance grid $\{ \vc{s}_1, \ldots, \vc{s}_4 \} = \{ (0, 0), (1, 0), (0, 1), (1, 1) \}$ with parameters $\rho = \alpha = 1$:
\begin{equation}
\label{eq:BR}
  \ell(x_1, \ldots, x_4) = \EE [ \max \{ x_1 W_1, \ldots, x_4 W_4 \} ]
\end{equation}
where $W_j = \exp \{ \varepsilon(\vc{s}_j) - \gamma(\vc{s}_j) \}$ and where $\{\varepsilon(\vc{s}) : \vc{s} \in \mathbb{R}^2 \}$ is a mean-zero Gaussian random field with $\varepsilon(0, 0) = 0$ almost surely and with semivariogram $\frac{1}{2} \EE[ \{ \varepsilon(\vc{s}) - \varepsilon(\vc{t}) \}^2 ] = \gamma(\vc{s} - \vc{t})$ given by $\gamma(\vc{h}) = \{(h_1^2+h_2^2)^{1/2}/\rho\}^\alpha$ \citep{kabluchko2009}.
\end{itemize}


We measured the performance of the estimators evaluated at $\vc{x} \in [0, 1]^d$ in terms of the integrated squared bias, the integrated variance, and the integrated mean squared error (MSE). For any estimator $\tilde{\ell}_{n,k}$ of $\ell$, these are defined as follows:
\begin{align*}
\textnormal{integrated squared bias:} \qquad & \int_{[0,1]^d} \left\{ \EE [ \tilde{\ell}_{n,k} (\vc{x})] - \ell (\vc{x})  \right\}^2 \diff \vc{x}; \\
\textnormal{integrated variance:} \qquad & \int_{[0,1]^d} \EE \left[ \left\{ \tilde{\ell}_{n,k} (\vc{x}) - \EE [ \tilde{\ell}_{n,k} (\vc{x})] \right\}^2  \right] \diff \vc{x}; \\
\textnormal{integrated MSE:} \qquad & \int_{[0,1]^d} \EE \left[ \left\{ \tilde{\ell}_{n,k} (\vc{x}) - \ell(\vc{x}) \right\}^2  \right] \diff \vc{x}.
\end{align*}
For each of these three quantities, we used the trick based on Fubini's theorem described in \citet[Appendix B]{segers2017} to combine the integral over $[0, 1]^d$ with the expectation over the random sample into a single expectation. As was done there, we computed these expectations using a simple Monte Carlo procedure based upon $20 \, 000$ pseudo-random samples.

\begin{figure}
\centering
\subfloat{\includegraphics[width=0.33\textwidth]{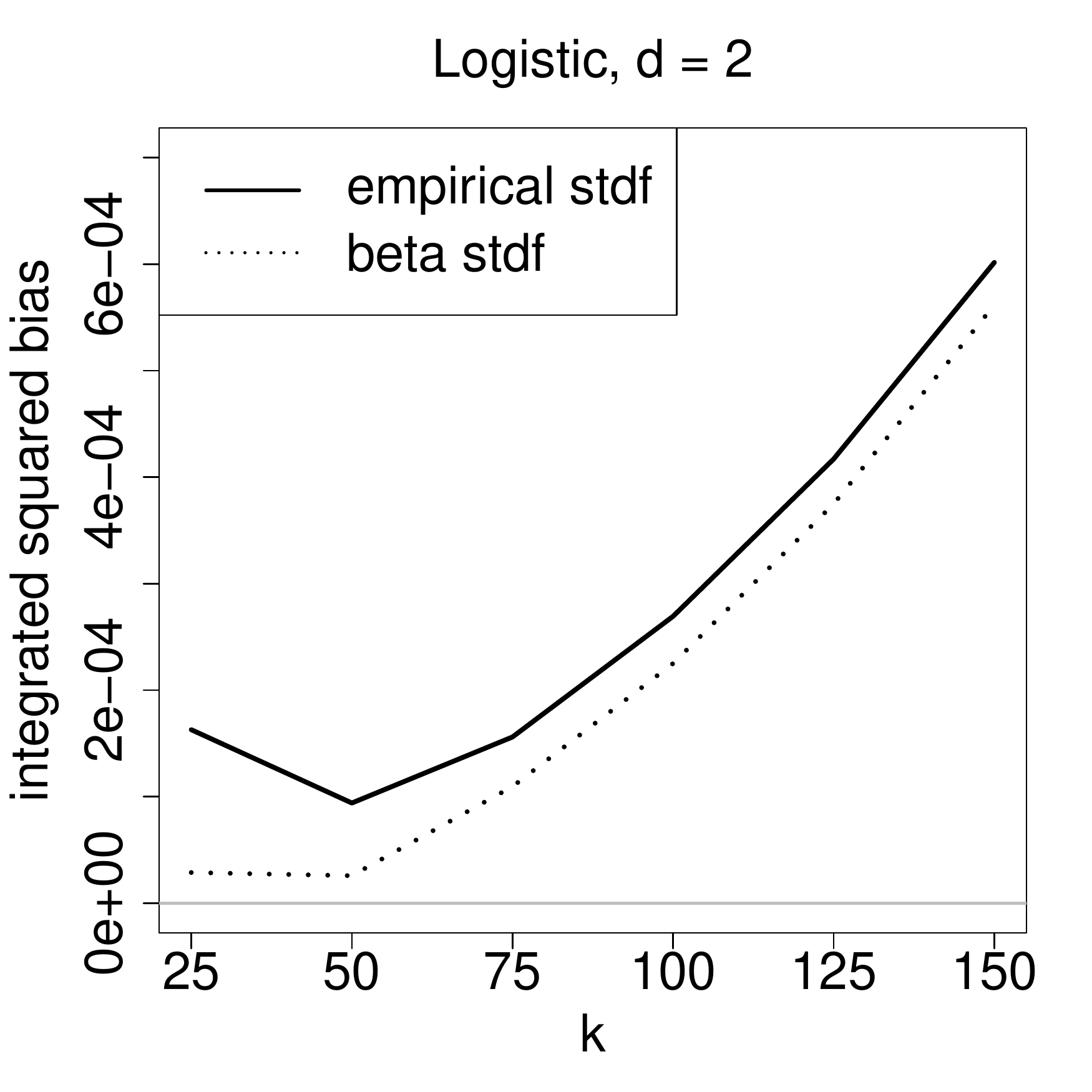}} 
\subfloat{\includegraphics[width=0.33\textwidth]{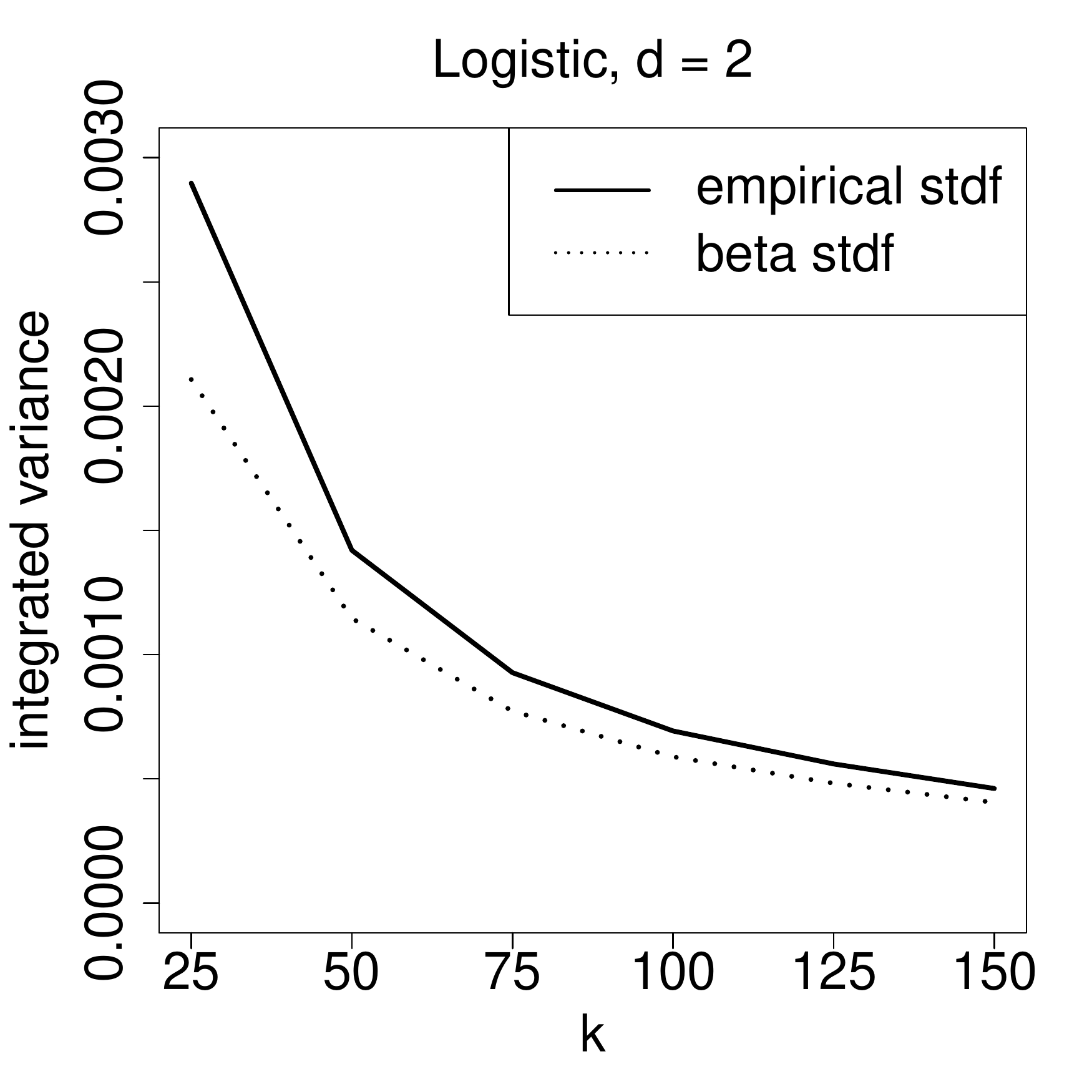}} 
\subfloat{\includegraphics[width=0.33\textwidth]{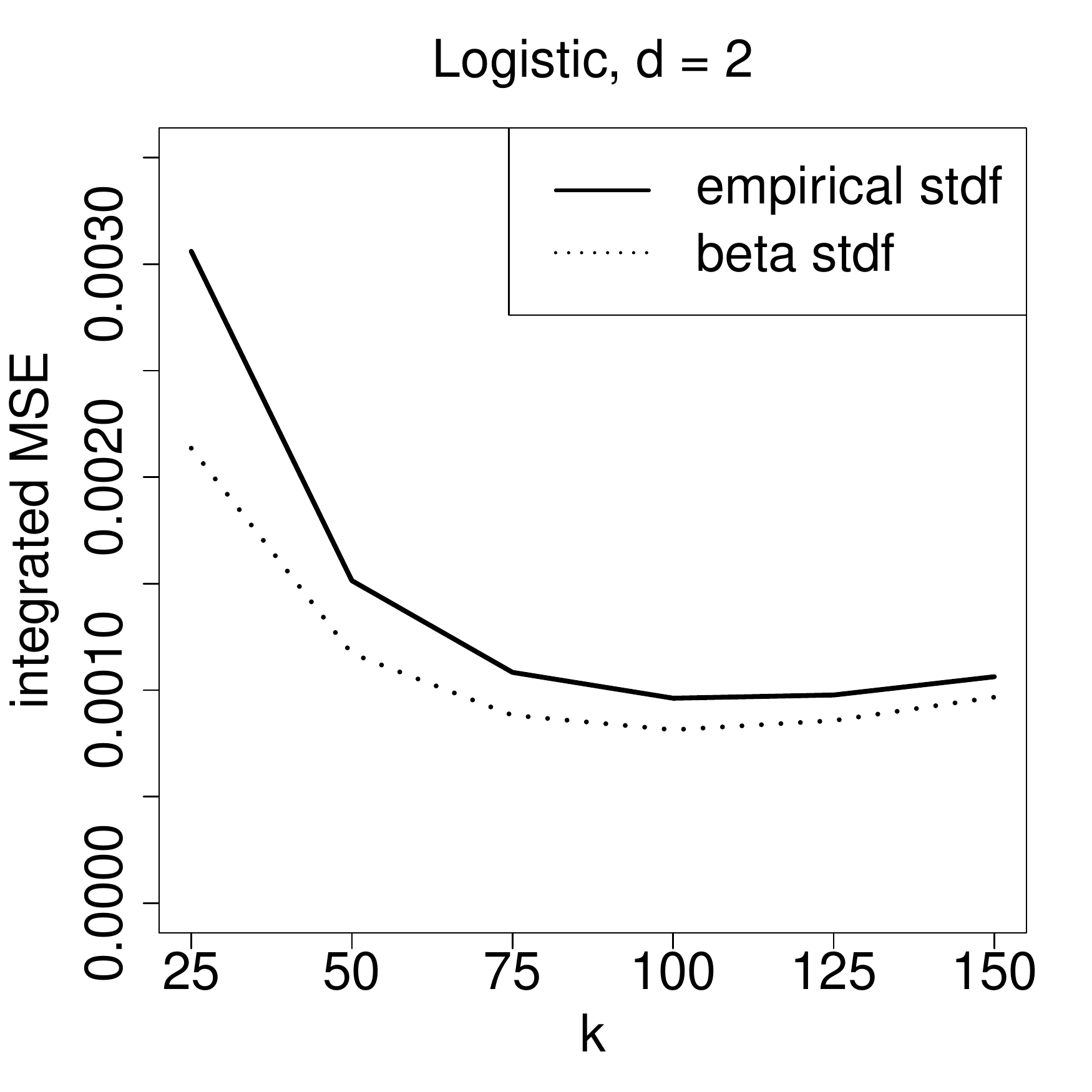}} \\
\subfloat{\includegraphics[width=0.33\textwidth]{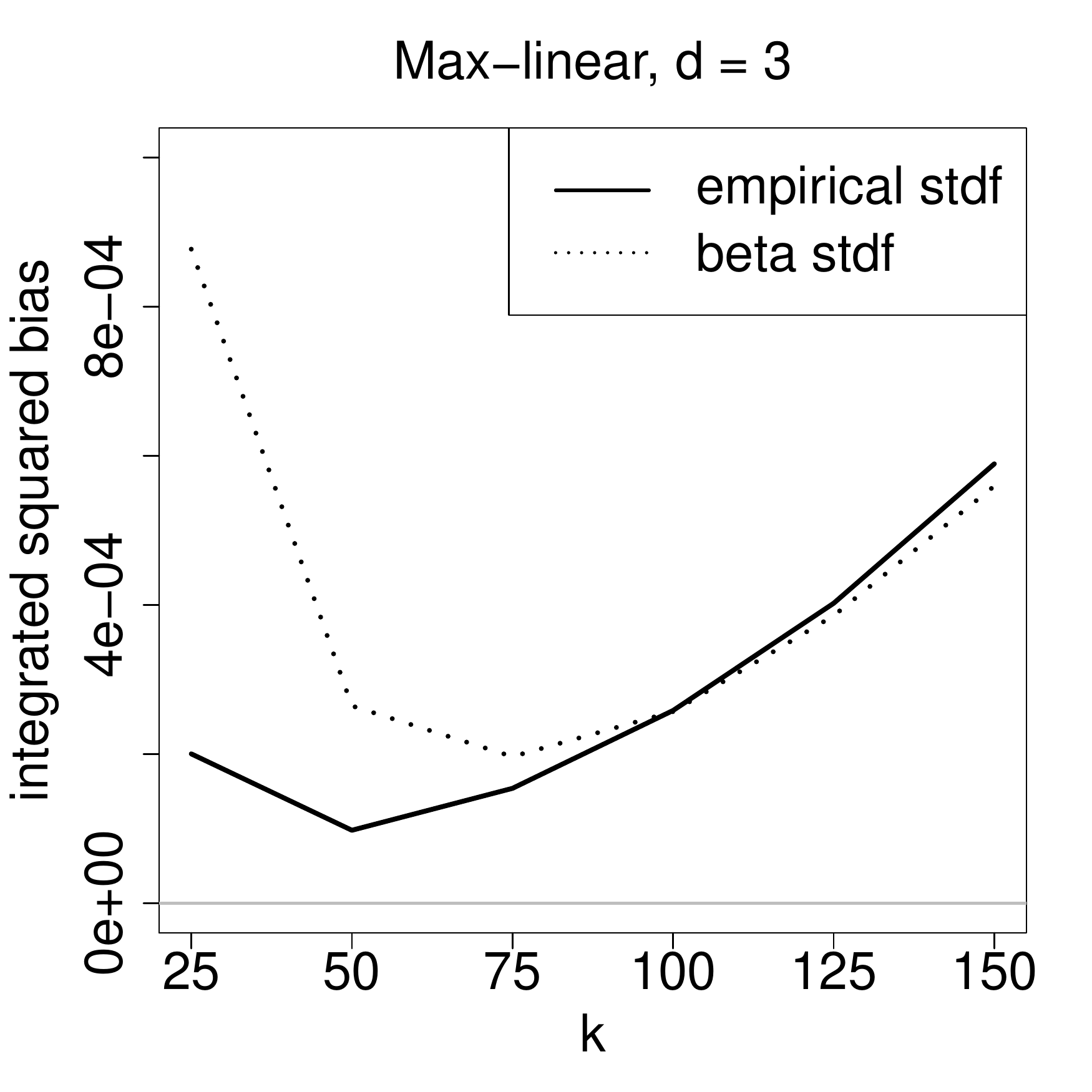}} 
\subfloat{\includegraphics[width=0.33\textwidth]{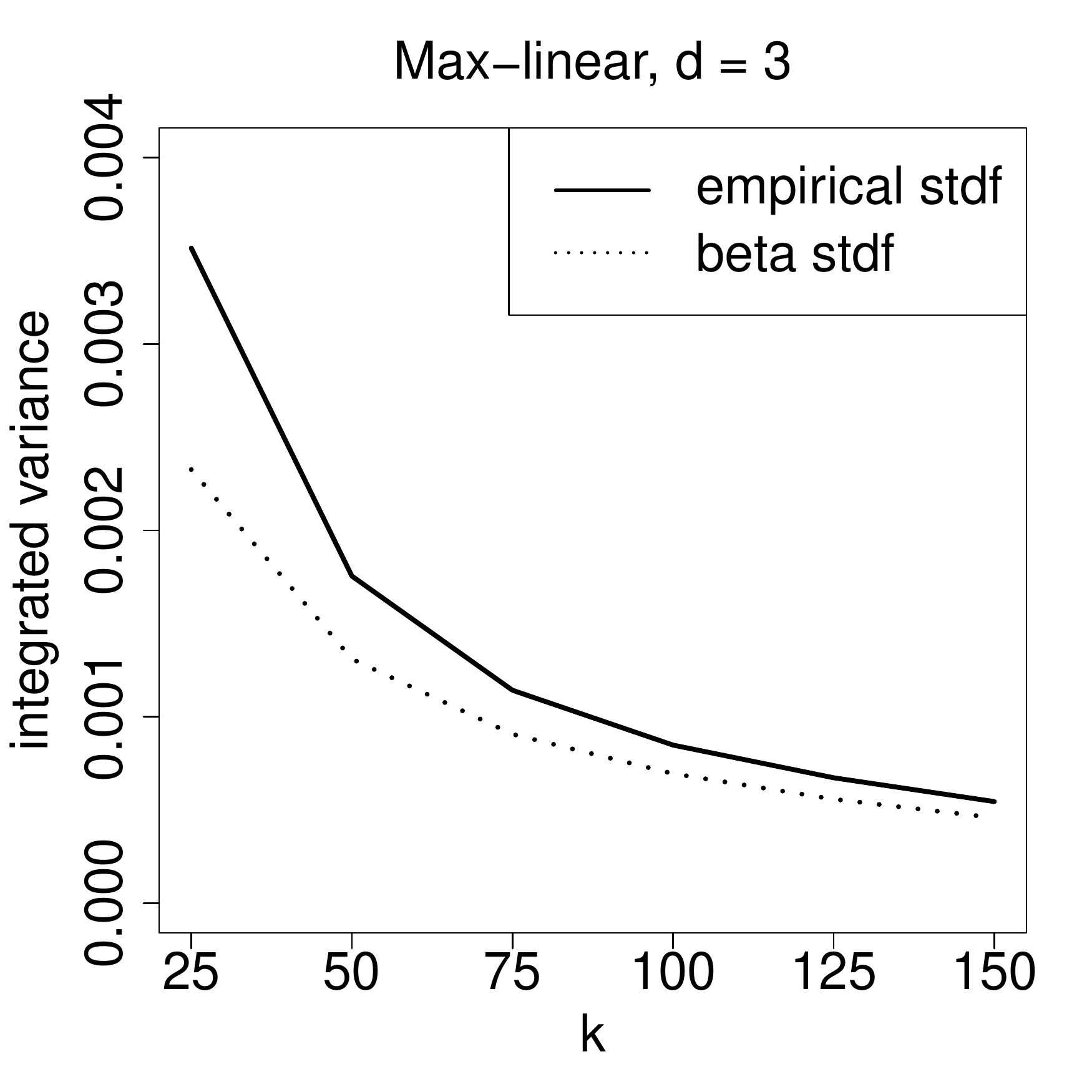}} 
\subfloat{\includegraphics[width=0.33\textwidth]{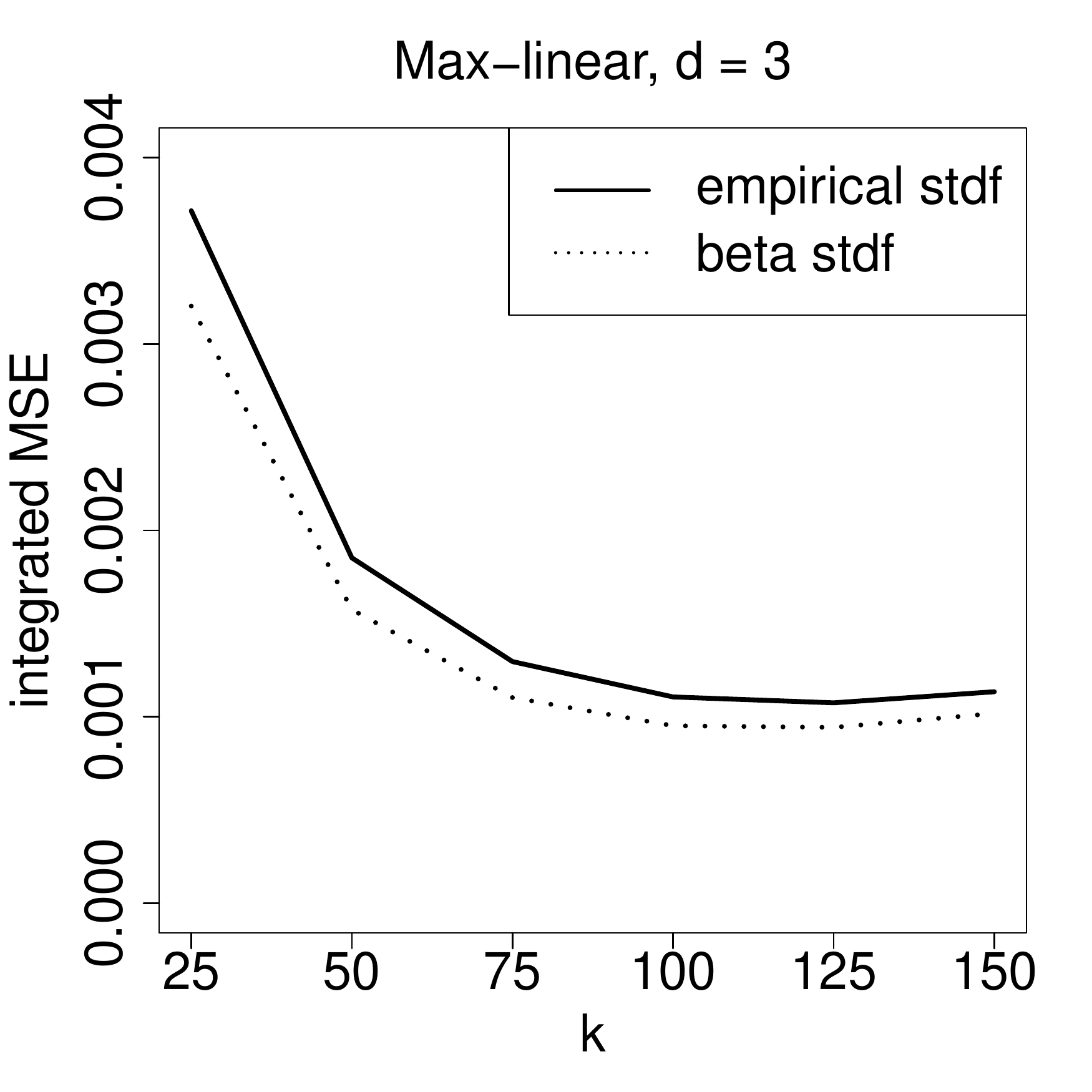}} \\
\subfloat{\includegraphics[width=0.33\textwidth]{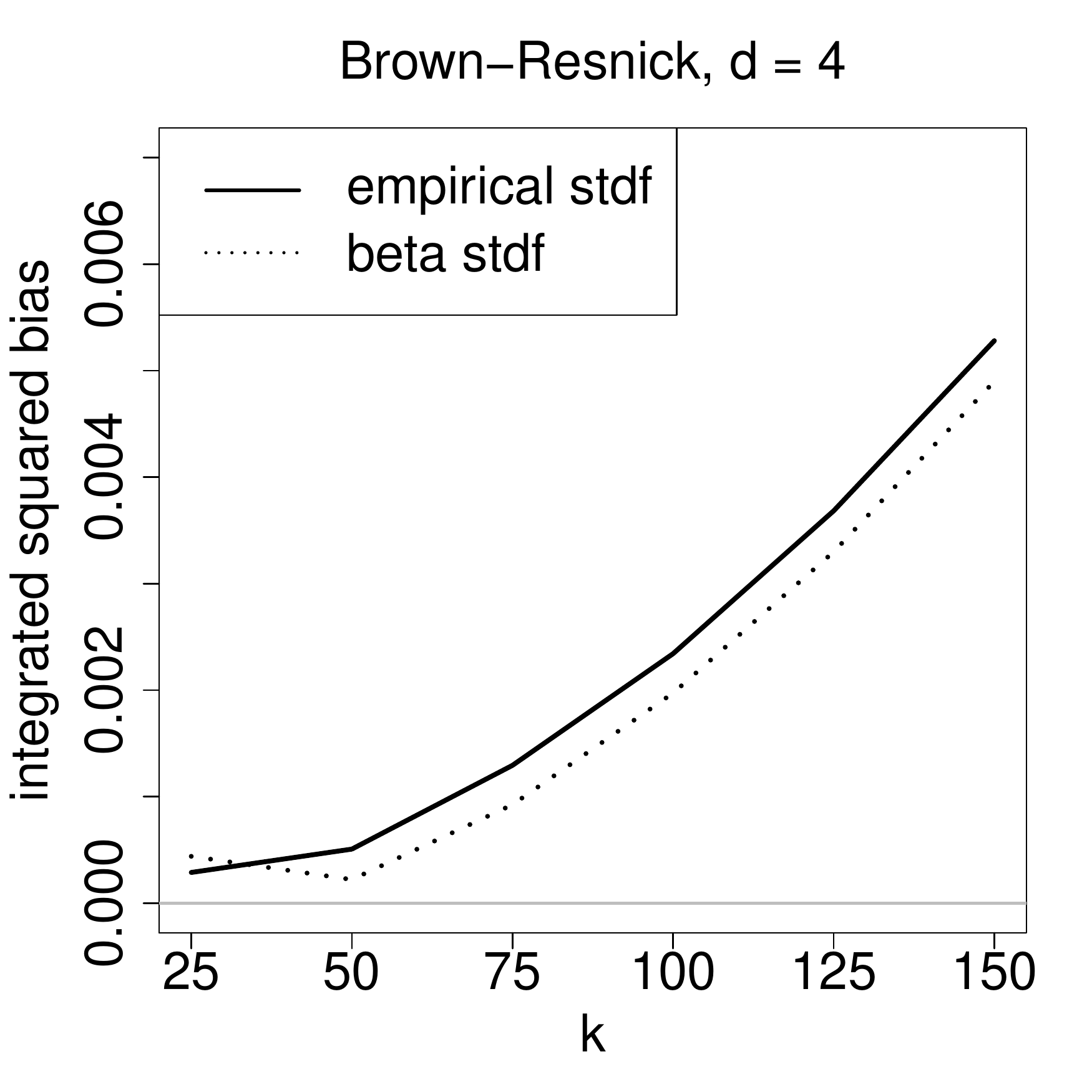}} 
\subfloat{\includegraphics[width=0.33\textwidth]{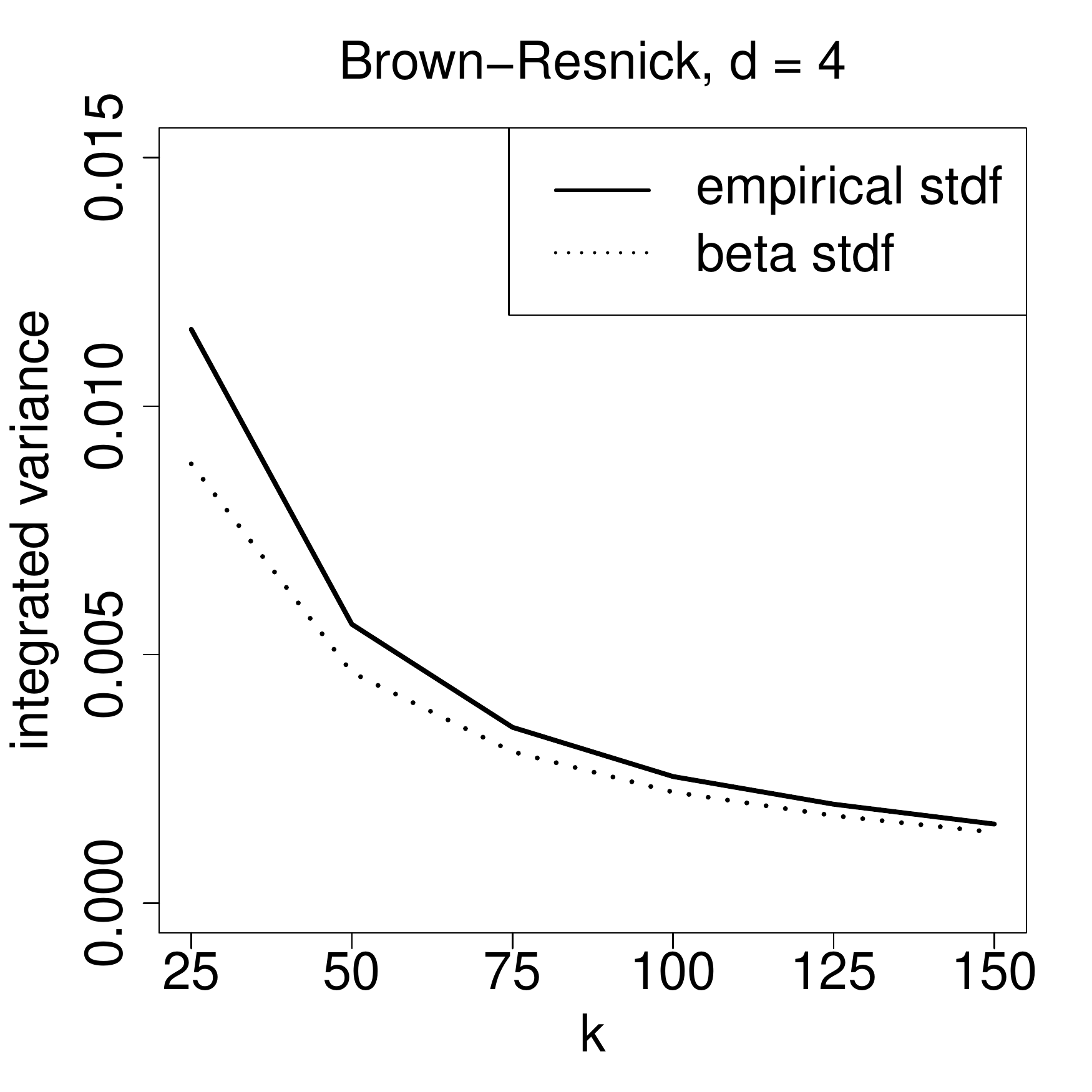}} 
\subfloat{\includegraphics[width=0.33\textwidth]{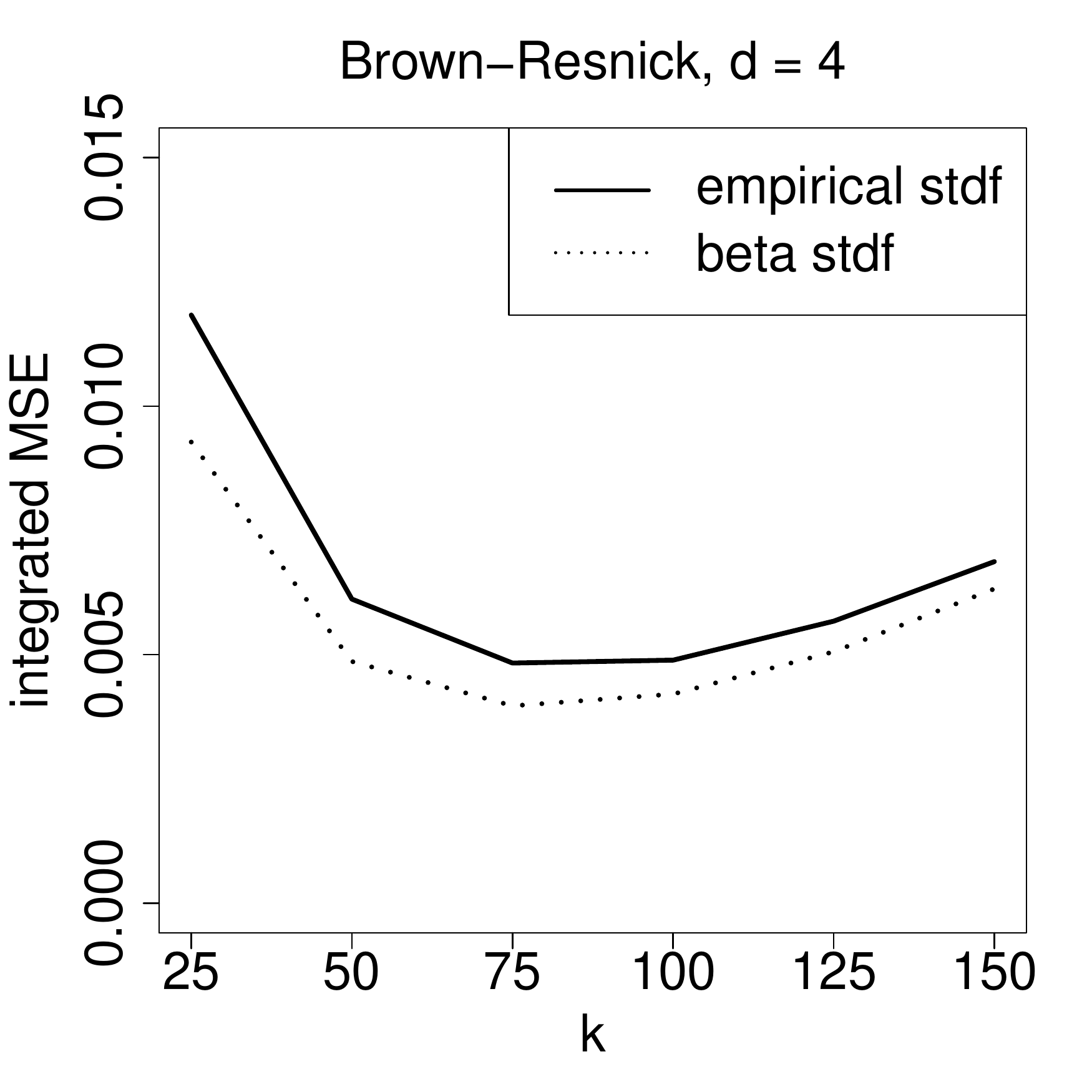}}
\caption{Integrated squared bias, variance and MSE of the empirical stdf $\ell_{n,k}$ (full line) and empirical beta stdf $\ell_{n,k}^{\beta}$ (dotted line) for the two-dimensional logistic model (upper panels), the three-dimensional max-linear model (middle panels) and the four-dimensional Brown--Resnick process (lower panels) as described in Section~\ref{sec:simul}. Plots based on 20\,000 samples of size $n = 1\,000$.} 
\label{fig:ellTot}
\end{figure}

Figure \ref{fig:ellTot} shows the integrated squared bias, integrated variance, and integrated MSE for the three models considered. Every sample had size $n = 1\,000$ and we considered $k \in \{25,50,75,100,125,150\}$. In all cases, the empirical beta stdf had a lower integrated variance and a lower integrated MSE than the empirical stdf. For the integrated bias, the results depended on the model. For the logistic and Brown--Resnick models, the empirical beta stdf had a lower integrated squared bias as well. For the max-linear model, however, the empirical beta stdf had a higher integrated squared bias, especially at smaller values of $k$. Since the marginal variance of the smoothing measure $\nu_{n,k,\vc{x}}$ in \eqref{eq:ellbeta:nu} is of the order $k^{-1}$, a smaller value of $k$ entails a larger smoothing window. For non-differentiable models such as the max-linear one, there is thus a risk of oversmoothing.

\subsection*{Semiparametric estimation}

Assume that the unknown stdf $\ell$ belongs to a parametric family $\{ \ell_\theta : \theta \in \Theta \}$, i.e., $\ell \equiv \ell_{\theta_0}$ for some unknown $\theta_0 \in \Theta$ with $\Theta \subset \mathbb{R}^p$. The aim is to estimate the unknown parameter (vector) $\theta_0$. For the logistic model~\eqref{eq:logistic}, we have $\theta \in [0, 1] = \Theta$, whereas for the Brown--Resnick process~\eqref{eq:BR}, we have $\theta = (\alpha, \rho) \in (0, 2] \times (0, \infty) = \Theta$. The model is semiparametric, since no assumptions are made on the margins except for continuity. In addition, the function $\ell$ only concerns the tail behaviour of the copula of the underlying distribution, not the whole copula.

\citet{einmahl2016nr2} proposed to estimate $\theta$ by a weighted least squares estimator,
\begin{equation}
\label{eq:wls}
  \hat{\theta}_{n,k} = \operatornamewithlimits{\arg\min}_{\theta \in \Theta}
  \sum_{r=1}^q \sum_{s=1}^q 
  \{ \ell_\theta(\vc{c}_r) - \hat{\ell}_{n,k}(\vc{c}_r) \} \,
  \Omega_{rs}(\theta) \,
  \{ \ell_\theta(\vc{c}_s) - \hat{\ell}_{n,k}(\vc{c}_s) \},
\end{equation}
where $\vc{c}_1, \ldots, \vc{c}_q \in [0, \infty)^d$ are $q$ points chosen in the domain of the stdf and where $\Omega(\theta)$ is a positive definite $q \times q$ matrix. For the latter, the identity matrix is a valid choice, while the asymptotic variance may further be reduced by a data-adaptive choice discussed in the cited article. Further, $\hat{\ell}_{n,k}$ can be any initial (nonparametric) estimator of $\ell$.

We compared the performance of the weighted least squares estimator when the initial estimator $\hat{\ell}_{n,k}$ is either the empirical stdf or the empirical beta stdf. For the bivariate logistic model~\eqref{eq:logistic}, we chose $q=4$ points $\vc{c}_m$ as $(1/2,1/2)$, $(1,1/2)$, $(1/2,1)$, and $(1,1)$, while for the four-dimensional Brown--Resnick process~\eqref{eq:BR}, we chose $q=6$ points $\vc{c}_m$ as $(1, 1, 0, 0), \ldots, (0, 0, 1, 1)$, i.e., all possible vectors with two elements equal to one and two elements equal to zero. For simplicity, the weight matrix $\Omega$ was chosen as the identity matrix.

Figure~\ref{fig:semipar} presents the root mean squared error of the weighted least squares estimator based on $500$ samples of size $n = 1\,000$ for the parameter $\theta$ of the logistic model \eqref{eq:logistic} and the parameters $\alpha$ and $\rho$ of the Brown--Resnick process \eqref{eq:BR}, with $k \in \{25, 50, \ldots, 150\}$. In all cases, the weighted least squares estimator based upon the empirical beta stdf had a lower root mean squared error than the one based upon the empirical stdf.

The use of the empirical beta stdf for inference on the parameters of (generalized) max-linear models was investigated in \citet{kiriliouk2017}. In line with our findings, the empirical beta stdf was found to yield a reduction in root mean squared error, at the expense of an increased bias due to oversmoothing.

\begin{figure}
\centering
\subfloat{\includegraphics[width=0.33\textwidth]{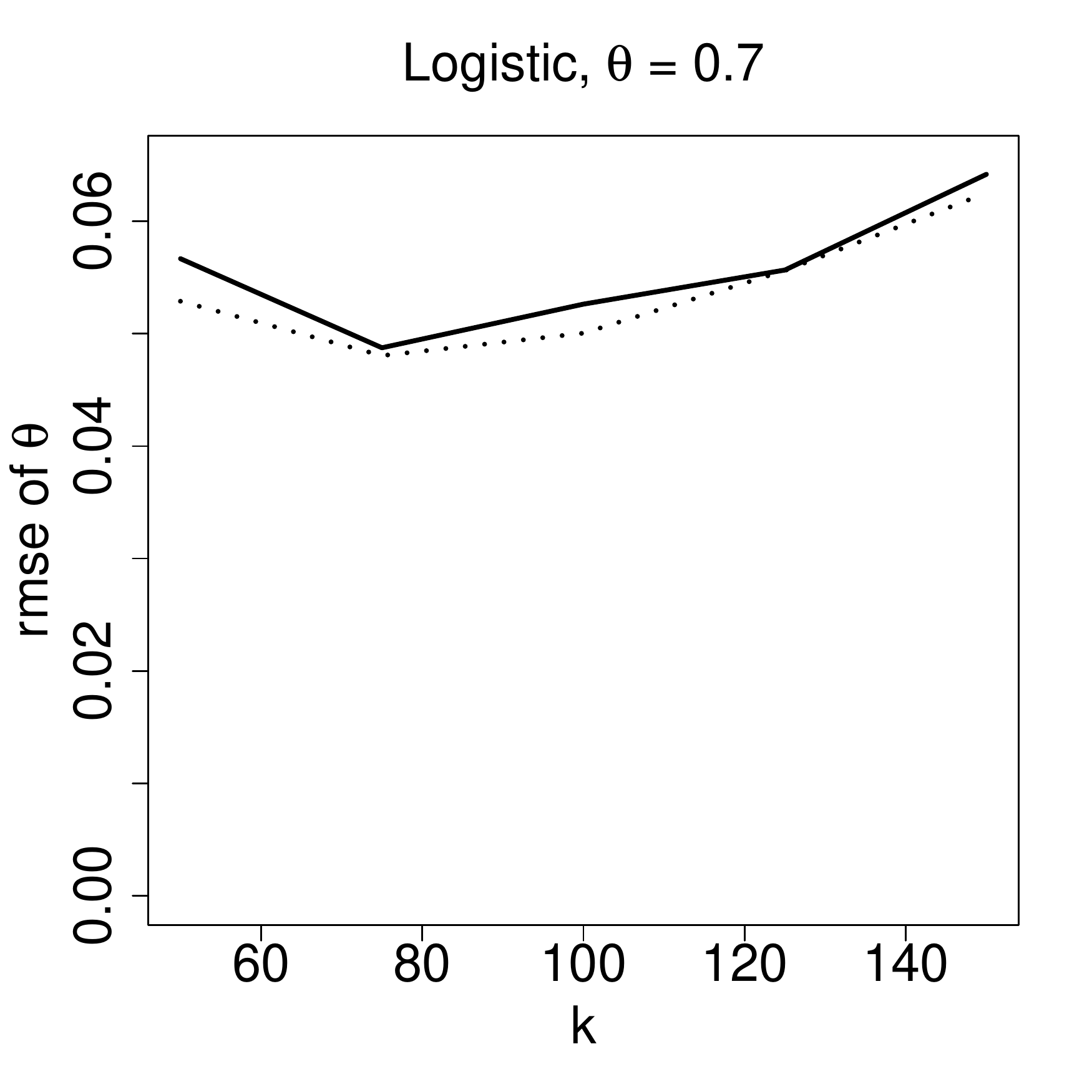}} 
\subfloat{\includegraphics[width=0.33\textwidth]{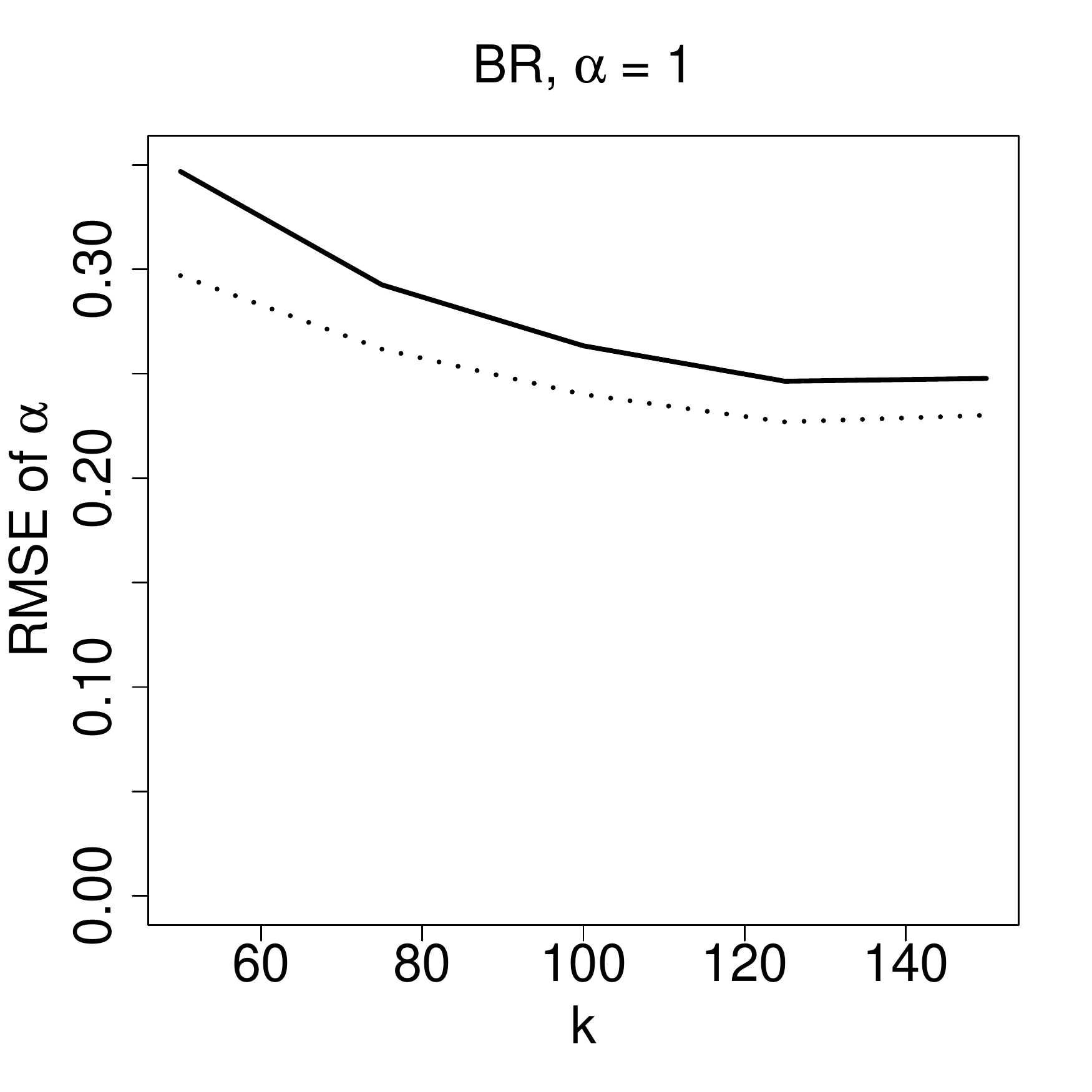}} 
\subfloat{\includegraphics[width=0.33\textwidth]{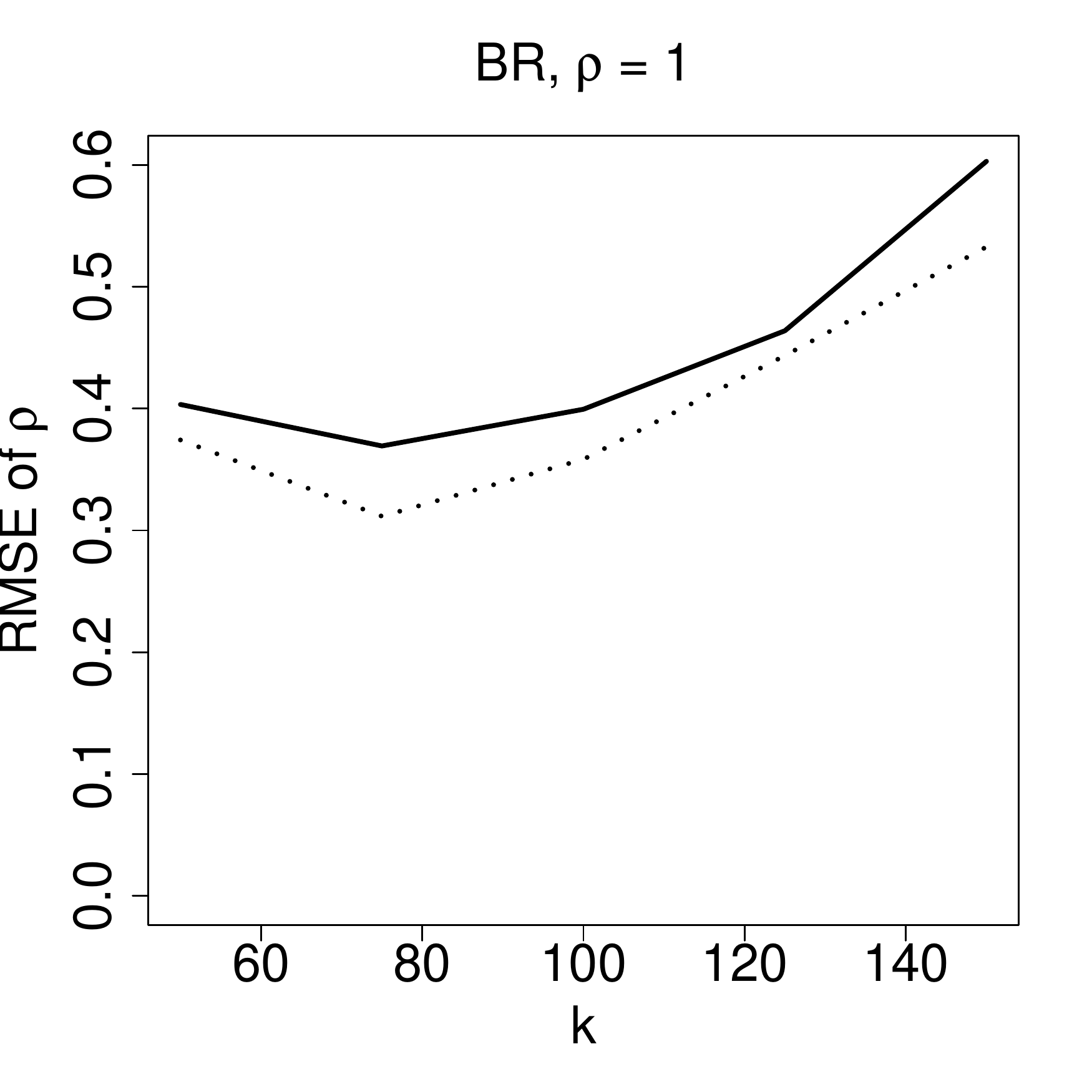}}
\caption{\label{fig:semipar} Root mean squared error of the weighted least squares estimator in \eqref{eq:wls} when the initial estimator is either the empirical stdf (solid) or the empirical beta stdf (dotted lines) for the parameter $\theta = 0.7$ (left) of the logistic model in \eqref{eq:logistic} and the parameters $\alpha = 1$ (middle) and $\rho = 1$ (right) of the Brown--Resnick process in \eqref{eq:BR}. Results based upon $500$ samples of size $n = 1\,000$ and for $k \in \{25, 50, \ldots, 150\}$.}
\end{figure}

\section{Resampling}
\label{sec:resampling}

The calculation of standard errors and of critical values of test statistics requires the distribution of the limit process $B$ in Theorem~\ref{thm:weak}. For the empirical (beta) stdf, the limiting process is Gaussian and has a covariance function that depends in a complicated way on the unknown $\ell$ and its derivatives, see e.g.~\citet[Remark~4.5 and Theorem~4.6]{einmahl2012m}. As in \citet{bucher2013nr2}, we therefore want to approximate the limiting distribution using resampling methods. To do so, we propose a method that exploits the fact that the empirical beta copula is a genuine copula. 

We first recall how to simulate a single observation from the empirical beta copula $\copb$ in \eqref{eq:empbcop} based on the sample $\vc{X}_1,\ldots,\vc{X}_n$ with marginal ranks $R_{ij,n}$. 
A single data point $\vc{U}^*$ from $\copb$ is generated as follows:
\begin{enumerate}[({A}1)]
\item Draw a random integer $I$ uniformly from $\{1,\ldots,n\}$.
\item Put $r_j = R_{Ij,n}$ for $j \in \{1, \ldots, d\}$.
\item Draw independent random variables $V_j \sim \operatorname{Beta}(r_j, n-r_j+1)$ for $j \in \{1,\ldots,d\}$.
\item Put $\vc{U}^{*} = (V_1, \ldots, V_d)$.
\end{enumerate}

In order to generate a bootstrap replicate $(R_{ij,n}^{*})_{ij}$ of the $n \times d$ matrix of ranks $(R_{ij,n})_{ij}$ and of quantities based upon it such as the empirical (beta) copula and empirical (beta) stdf, we proceed as follows:
\begin{enumerate}[({B}1)]
\item Draw an independent random sample $\vc{U}_i^* = (U_{i1}^*, \ldots, U_{id}^*)$, with $i \in \{1,\ldots,n\}$, from $\copb$, using (A1)--(A4).
\item Compute the componentwise ranks $R_{ij,n}^{*} = \sum_{t=1}^n \I \{ U_{tj}^* \le U_{ij}^* \}$ for $i \in \{1,\ldots,n\}$ and $j \in \{1,\ldots,d\}$.
\item Compute the empirical beta copula $\mathbb{C}_n^{\beta*}$ and the empirical beta stdf $\ell_{n,k}^{\beta*}$ based on the bootstrapped rank matrix $(R_{ij,n}^{*})_{ij}$.
\end{enumerate}

Then we can for instance estimate the distribution of
\[
  B_{n,k}^\beta = \sqrt{k} ( \ell_{n,k}^{\beta} - \ell )
\]
by the distribution of
\[
  B_{n,k}^{\beta*} = \sqrt{k} ( \ell_{n,k}^{\beta*} - \ell_{n,k}^{\beta} )
\] 
conditionally on the data. To compute the latter, we use a Monte Carlo approximation via the empirical distribution of a large number of independent bootstrap samples of $\ell_{n,k}^{\beta*}$ (independent conditionally on the data) generated through (B1)--(B3).


In dimension $d=2$, \citet{bucher2013nr2} proposed two multiplier bootstrap procedures for approximating the distribution of the empirical lower tail copula. The lower and upper tail copulas, $\Lambda_L$ and $\Lambda_U$, of a bivariate copula $C$ are defined as 
\begin{align*} 
  \Lambda_L(x_1, x_2) &= \lim_{t \downarrow 0} t^{-1} \, C(tx_1, tx_2), &
  \Lambda_U(x_1, x_2) &= \lim_{t \downarrow 0} t^{-1} \, \bar{C}(tx_1, tx_2),
\end{align*}
respectively, where $(x_1, x_2) \in [0, \infty)^2$ and where $\bar{C}(u, v) = u+v-1+C(1-u,1-v)$ is the survival copula associated to $C$. The upper tail copula is related to the stdf via $\Lambda_U(x_1, x_2) = x_1 + x_2 - \ell(x_1, x_2)$. Our methods can thus be applied to lower and upper tail copulas as well: changing `upper' into `lower' is merely a matter of convention, while the switch from estimators of $\Lambda_U$ to estimators of $\ell$ does not change the asymptotic distributions (up to sign), since the empirical (beta) stdf already has the correct margins [up to $\Oh_{\mathbb{P}}(k^{-1})$]. Specifically, the empirical lower tail copula and the empirical beta lower tail copula are defined as
\begin{align*}
  \Lambda_{L;n,k}(x_1, x_2) &= \tfrac{n}{k} \, \cop( \tfrac{k}{n} x_1, \tfrac{k}{n} x_2 ), &
  \Lambda_{L;n,k}^{\beta}(x_1, x_2) &= \tfrac{n}{k} \, \copb( \tfrac{k}{n} x_1, \tfrac{k}{n} x_2 ), 
\end{align*}
respectively. The resampling procedure (B1)--(B3) applies to these estimators as well and we propose to estimate the asymptotic distribution of 
\[ 
  \alpha_{n,k}^{\beta} = \sqrt{k} (\Lambda_{L;n,k}^{\beta} - \Lambda_L) 
\] 
by the one of 
\[ 
  \alpha_{n,k}^{\beta*} = \sqrt{k} (\Lambda_{L;n,k}^{\beta*} - \Lambda_{L;n,k}^{\beta})
\]
conditionally on the data, where $\Lambda_{L;n,k}^{\beta*}(x_1, x_2) = \frac{n}{k} \mathbb{C}_n^{\beta*}( \frac{k}{n} x_1, \frac{k}{n} x_2 )$ is the empirical beta lower tail copula computed from the resampled rank matrix $(R_{ij,n}^*)_{i,j}$ as in (B1)--(B3).

The two multiplier bootstrap procedures that \citet{bucher2013nr2} proposed are the direct multiplier bootstrap and the partial derivatives bootstrap. We focus on the former, as the latter was reported to perform equally well. Let $\xi_1, \ldots, \xi_n$ be independent and identically distributed, positive random variables, independent of the data, with unit mean and unit variance, and let $\bar{\xi}_n = n^{-1} \sum_{i=1}^n \xi_i$ denote their sample mean. In their Monte Carlo simulations, \citet{bucher2013nr2} took the common distribution of the multipliers as $\PP[\xi_i = 0] = \PP[\xi_i = 2] = 1/2$. \citet{bucher2013nr2} proposed to resample the empirical copula by $\mathbb{C}_n^{\xi}$ defined via
\begin{align*}
  F_n^{\xi}(x_1, x_2)
  &=
  \frac{1}{n} \sum_{i=1}^n \frac{\xi_i}{\bar{\xi}_n} \I \{ X_{i1} \le x_1, X_{i2} \le x_2 \}, \\
  F_{nj}^{\xi}(x_j)
  &=
  \frac{1}{n} \sum_{i=1}^n \frac{\xi_i}{\bar{\xi}_n} \I \{ X_{ij} \le x_j \}, \\
  \mathbb{C}_n^{\xi}(u_1, u_2)
  &=
  F_n^{\xi} \bigl( F_{n1}^{\xi-}(u_1), F_{n2}^{\xi-}(u_2) \bigr),
\end{align*}
where the generalized inverse $G^-$ of a distribution function $G$ is given by $G^-(p) = \inf \{ x \in \mathbb{R} : G(x) \ge p \}$ for $p \in (0, 1]$ and $G^-(0) = \sup \{ x \in \mathbb{R} : G(x) = 0 \}$. The direct multiplier version of the empirical lower tail copula is
\[
  \Lambda_{L;n,k}^{\xi}(x_1, x_2)
  =
  \tfrac{n}{k} \, \mathbb{C}_n^{\xi}( \tfrac{k}{n} x_1, \tfrac{k}{n} x_2 ).
\]
\citet[Theorem~3.4]{bucher2013nr2} show that the asymptotic distribution of $\alpha_{n,k}$ can be estimated consistently by the distribution of
\[
  \alpha_{n,k}^\xi = \sqrt{k} ( \Lambda_{L;n,k}^{\xi} - \Lambda_{L;n,k} )
\]
conditionally on the data.

We compared the beta resampler $\alpha_{n,k}^{\beta*}$ and the direct multiplier bootstrap $\alpha_{n,k}^{\xi}$ in a numerical experiment with the same settings as the ones in \citet[Section~3.2]{bucher2013nr2}. Specifically, data were generated from the bivariate Clayton copula $C_\theta(u_1, u_2) = (u_1^{-\theta} + u_2^{-\theta} - 1)^{-1/\theta}$ with parameter $\theta = 0.5$. Its lower tail copula is $\Lambda_L(x_1, x_2) = (x_1^{-\theta} + x_2^{-\theta})^{-1/\theta}$, which is the negative logistic model of \cite{joe1990}.

We generated $1\,000$ samples of size $n = 1\,000$, computed estimators at $k = 50$, and considered the resampling methods with $500$ bootstrap replications per sample. We compared the two resampling methods on the basis of the accuracy with which they estimate the true covariance matrix of the trivariate Gaussian random vector that arises by evaluating the limit process, $\alpha$, of $\alpha_{n,k}$ at the points $(x_{1m}, x_{2m}) = (\cos(m\pi/8), \sin(m\pi/8))$ for $m \in \{1, 2, 3\}$. For each data sample, the true $3 \times 3$ covariance matrix was estimated via the sample covariance matrix over the $500$ bootstrap replications. For each of the two methods, we thus obtain $1\,000$ estimates of the true covariance matrix.

Table~\ref{tab:boot} shows the following information:
\begin{itemize}
\item columns 1--3: the true covariance matrix of the limit process $\alpha$ evaluated at $(x_{1m}, x_{2m})_{m=1,2,3}$ -- copied from \citet[Table~1]{bucher2013nr2};
\item columns 4--6: the average of the $1\,000$ estimates of the covariance matrix via the beta resampler $\alpha_{n,k}^{\beta*}$ (lines 1--3) and the mean squared error (MSE) of these estimates (lines 4--6);
\item columns 7--9: idem as in columns 4--6, now for the direct multiplier bootstrap $\alpha_{n,k}^{\xi}$ -- copied from \citet[Table~2]{bucher2013nr2}.
\end{itemize}
We see that the beta resampler is more accurate than the direct multiplier bootstrap.

\begin{table}
\centering
\begin{tabular}{cccccccccccccc}
\toprule
\multicolumn{1}{l}{} & \multicolumn{3}{c}{True} 
& \multicolumn{2}{l}{} & \multicolumn{3}{c}{$\alpha_{n,k}^\beta$} 
& \multicolumn{2}{l}{} & \multicolumn{3}{c}{$\alpha_{n,k}^{\xi}$} \\
\cmidrule(r){2-4}
\cmidrule(r){7-9}
\cmidrule(r){12-14}
\text {} & $\tfrac{\pi}{8}$  & $\tfrac{2 \pi}{8}$  & $\tfrac{3 \pi}{8}$  
& & & $\tfrac{\pi}{8}$  & $\tfrac{2 \pi}{8}$  & $\tfrac{3 \pi}{8}$  
& & & $\tfrac{\pi}{8}$  & $\tfrac{2 \pi}{8}$  & $\tfrac{3 \pi}{8}$  \\
\midrule
$\tfrac{\pi}{8}$ &  0.087 & 0.075 & 0.052 & & 
$\tfrac{\pi}{8}$ &  0.087 & 0.084 & 0.049 & & 
$\tfrac{\pi}{8}$ &  0.100 & 0.071 & 0.045 \\[1ex]
$\tfrac{2 \pi}{8}$ &  & 0.116 & 0.075 & & 
$\tfrac{2 \pi}{8}$ &  & 0.106 & 0.069 & & 
$\tfrac{2 \pi}{8}$ &  & 0.136 & 0.071  \\[1ex]
$\tfrac{3 \pi}{8}$ & & & 0.087 & & 
$\tfrac{3 \pi}{8}$ & & & 0.077 & & 
$\tfrac{3 \pi}{8}$ & & & 0.099  \\
\midrule
$\tfrac{\pi}{8}$ &  & & & & 
$\tfrac{\pi}{8}$ &  1.91 & 3.11 & 1.71 & & 
$\tfrac{\pi}{8}$ &  3.86 & 3.49 & 2.72 \\[1ex]
$\tfrac{2 \pi}{8}$ &  & & & & 
$\tfrac{2 \pi}{8}$ &  & 4.28 & 2.34 & & 
$\tfrac{2 \pi}{8}$ &  & 8.89 & 3.25 \\[1ex]
$\tfrac{3 \pi}{8}$  & & & & & 
$\tfrac{3 \pi}{8}$  & & & 2.34 & & 
$\tfrac{3 \pi}{8}$  & & & 3.77 \\
\bottomrule 
\end{tabular}
\caption{\label{tab:boot} \small Columns 1--3: true covariance matrix of the limit process $\alpha$ evaluated at three points on the unit circle when the underlying copula is Clayton with parameter $\theta = 0.5$. Columns 4--6: average of $1\,000$ estimates of the covariance matrix (rows 1--3) and their $\operatorname{MSE} \times 10^4$ (rows 4--6) based on samples of size $n = 1\,000$ calculated at $k = 50$ and computed using the beta resampler $\alpha_{n,k}^{\beta*}$, each estimate being given by the sample covariance matrix obtained from $500$ bootstrap replications. Columns 7--9: idem for the direct multiplier bootstrap $\alpha_{n,k}^{\xi}$. Columns 1--3 and 7--9 have been copied from \citet[Tables~1 and~2]{bucher2013nr2}.}  
\end{table}

\appendix

\section{Proofs of Propositions~\ref{prop:stoch} and~\ref{prop:bias}}
\label{sec:proofs}

\begin{proof}[Proof of Proposition~\ref{prop:stoch}]
Fix $\eps \in (0, \delta]$. Since $\nu_{n,k,\vc{x}}$ is a probability measure, we can bring the term $B_{n,k}(\vc{x})$ inside the integral. Split the integral according to the two cases $\lvert \vc{y} - \vc{x} \rvert_\infty \le \eps$ or $\lvert \vc{y} - \vc{x} \rvert_\infty > \eps$, where $\lvert \vc{z} \vert_\infty = \max ( \lvert z_1 \rvert, \ldots, \lvert z_d \rvert )$ for $\vc{z} \in \mathbb{R}^d$. For $\vc{x} \in [0, 1]^d$, the absolute value in \eqref{eq:BnknuB} is bounded by
\begin{multline}
\label{eq:BnknuB:split}
  \sup \left\{
    \lvert B_{n,k}(\vc{y}) - B_{n,k}(\vc{x}) \rvert 
    \; : \;
    \vc{y} \in [0, n/k]^d, \, \lvert \vc{y} - \vc{x} \rvert_\infty \le \eps 
  \right\} \\
  +
  2 \sup_{\vc{y} \in [0, n/k]^d} \lvert B_{n,k}( \vc{y} ) \rvert
  \cdot
  \nu_{n,k,\vc{x}} \left( 
    \{ \vc{y} \in [0, n/k]^d : \lvert \vc{y} - \vc{x} \rvert_\infty > \eps \} 
  \right).
\end{multline}

In the first term in \eqref{eq:BnknuB:split}, we have $\vc{x} \in [0, 1]^d$, $\vc{y} \in [0, n/k]^d$, and $\lvert \vc{y} - \vc{x} \rvert_\infty \le \eps \le \delta$, whence $\vc{y} \in [0, 1+\delta]^d$. The supremum is thus bounded by the maximal increment of $B_{n,k}$ on $[0, 1+\delta]^d$ between points at a distance at most $\eps$ apart, i.e.,
\[
  \omega( B_{n,k}, \eps )
  =
  \sup \{ 
    \lvert B_{n,k}(\vc{y}_1) - B_{n,k}(\vc{y}_2) \rvert :
    \vc{y}_1, \vc{y}_2 \in [0, 1+\delta]^d, \,
    \lvert \vc{y}_1 - \vc{y}_2 \rvert_\infty \le \eps
  \}.  
\]
By Condition~\ref{cond:B}, we can find for every $\eta > 0$ a sufficiently small $\eps > 0$ such that
\[
  \limsup_{n \to \infty}
  \PP \left[ \omega( B_{n,k}, \eps ) > \eta \right] \le \eta.
\]
The first term in \eqref{eq:BnknuB:split} can thus be made arbitrarily small with arbitrarily large probability, uniformly in $\vc{x} \in [0, 1]^d$ and for sufficiently large $n$.

For the second term in \eqref{eq:BnknuB:split}, note first that
\[
  \sup_{\vc{y} \in [0, n/k]^d} \lvert B_{n,k}( \vc{y} ) \rvert
  = \Oh_{\PP}( n / \sqrt{k} ), \qquad n \to \infty.
\]
Indeed, since $\ell$ is a stdf, we have $0 \le \ell(\vc{y}) \le y_1+\cdots+y_d \le dn/k$ for $\vc{y} \in [0, n/k]^d$; for the pilot estimator $\hat{\ell}_{n,k}$, use Condition~\ref{cond:growth}.

If $S$ is a $\operatorname{Bin}(n, u)$ random variable, Bennett's inequality \citep[Proposition~A.6.2]{van1996weak} states that
\[
  \PP[ \sqrt{n} \lvert S/n - u \rvert \ge \lambda ]
  \le
  2 \exp \left\{ - nu \, h \left( 1 + \frac{\lambda}{\sqrt{n} u} \right) \right\},
  \qquad
  \lambda > 0,
\]
where $h(1+\eta) = \int_0^\eta \log(1+t) \diff t$ for $\eta \ge 0$. Note that $h(1+\eta) > \frac{1}{3} \eta^2$ for $\eta \in [0, 1]$. It follows that
\begin{align*}
  \nu_{n,k,\vc{x}} \left( 
    \{ \vc{y} \in [0, n/k]^d : \lvert \vc{y} - \vc{x} \rvert_\infty > \eps \} 
  \right)
  &\le
  \sum_{j=1}^d
  \PP\left[ 
    \left\lvert 
      \operatorname{Bin}(n, \tfrac{k}{n} x_j) / k - x_j 
    \right\rvert > \eps 
  \right] \\
  &\le
  \sum_{j=1}^d
  \PP\left[ 
     \sqrt{n}
      \left\lvert 
	\operatorname{Bin}(n, \tfrac{k}{n} x_j) / n 
	- 
	\tfrac{k}{n} x_j 
      \right\rvert 
    > 
    k\eps/\sqrt{n} 
  \right] \\
  &\le
  \sum_{j=1}^d
  2 \exp \left\{ 
    - kx_j \, h \left( 1 + \frac{k\eps/\sqrt{n}}{\sqrt{n} kx_j/n} \right) 
  \right\} \\
  &=
  \sum_{j=1}^d
  2 \exp \left\{ 
    - kx_j \, h ( 1 + \eps/x_j )
  \right\}.
\end{align*}
As $\partial \{ x \, h(1+\eps/x) \} / \partial x < 0$ for $0 < x < 1$, we have $\inf_{x \in [0, 1]} \{ x \, h(1+\eps/x) \} = h(1+\eps)$. We conclude that
\[
  \nu_{n,k,\vc{x}} \left( 
    \{ \vc{y} \in [0, n/k]^d : \lvert \vc{y} - \vc{x} \rvert_\infty > \eps \} 
  \right)
  \le
  2d \exp \{ - k \, h(1+\eps) \}
  \le
  2d \exp \left( - \tfrac{1}{3} k \eps^2 \right).
\]
Together, the supremum over $\vc{x} \in [0, 1]^d$ of the second term in \eqref{eq:BnknuB:split} is of the order
\[
  \Oh_{\PP} \left( \frac{n}{\sqrt{k}} \exp \left( - \tfrac{1}{3} k \eps^2 \right) \right),
  \qquad n \to \infty.
\]
It therefore converges to zero in probability since $\log(n) = \oh(k)$ by assumption.
\end{proof}

\begin{proof}[Proof of Proposition~\ref{prop:bias}]
Fix $\vc{x} \in [0, M]^d$. For $j \in \{1, \ldots, d\}$ such that $x_j = 0$, the binomial distribution $\operatorname{Bin}(n, (k/n)x_j)$ is concentrated on $0$. As a consequence, the integral over $\vc{y} \in [0, n/k]^d$ with respect to $\nu_{n,k,\vc{x}}$ can be restricted to the set of those $\vc{y} \in [0, n/k]^d$ such that $y_j = 0$ for all $j \in \{1, \ldots, d\}$ for which $x_j = 0$. Call this set $\mathbb{D}(n,k,\vc{x})$.

For $\vc{y} \in \mathbb{D}(n,k,\vc{x})$, the function $[0, 1] \to \RR : t \mapsto f(\vc{x} + t(\vc{y}-\vc{x}))$ is continuous on $[0, 1]$ and is continuously differentiable on $(0, 1)$; indeed, if $x_j = 0$, then the $j$th component of $\vc{x} + t(\vc{y}-\vc{x})$ vanishes and thus does not depend on $t \in [0, 1]$, while if $x_j > 0$, then that component is (strictly) positive for all $t \in [0, 1)$, so that, by assumption, $\dot{f}_j(\vc{x} + t(\vc{y}-\vc{x}))$ exists and is continuous in $t \in [0, 1)$. Writing $J(\vc{x}) = \{j = 1, \ldots, d: x_j > 0\}$, we find, by the fundamental theorem of calculus,
\[
  f( \vc{y} ) - f( \vc{x} )
  =
  \sum_{j \in J(\vc{x})}
  (y_j - x_j) \int_0^1 \dot{f}_j \bigl( \vc{x} + t(\vc{y} - \vc{x}) \bigr) \diff t,
  \qquad
  \vc{y} \in \mathbb{D}(n,k,\vc{x}).
\]
We obtain
\begin{align*}
  \Delta_{n,k}(\vc{x})
  &:=
  \int_{\mathbb{D}(n,k,\vc{x})} \sqrt{k} \{ f(\vc{y}) - f(\vc{x}) \} \diff \nu_{n,k,\vc{x}}( \vc{y} ) 
  \\
  &=
  \sum_{j \in J(\vc{x})}
  \int_{\vc{y} \in \mathbb{D}(n,k,\vc{x})}
    \sqrt{k} (y_j - x_j) 
    \int_{0}^1 
      \dot{f}_j \bigl( \vc{x} + t(\vc{y} - \vc{x}) \bigr) 
    \diff t
  \diff \nu_{n,k,\vc{x}}(\vc{y}) \\
  &=
  \sum_{j \in J(\vc{x})}
  \int_{\vc{y} \in \mathbb{D}(n,k,\vc{x})}
     \sqrt{k} (y_j - x_j)
    \int_{0}^1 
      \bigl\{ 
	\dot{f}_j \bigl( \vc{x} + t(\vc{y} - \vc{x}) \bigr) - \dot{f}_j(\vc{x}) 
      \bigr\}
    \diff t
  \diff \nu_{n,k,\vc{x}}(\vc{y}),
\end{align*}
where the last step is justified via $\int y_j \diff \nu_{n,k,\vc{x}}(\vc{y}) = \EE[ \operatorname{Bin}(n, (k/n)x_j)/k ] = x_j$. Taking absolute values, we find, for $\vc{x} \in [0, M]^d$,
\[
  \left\lvert
    \Delta_{n,k}(\vc{x})
  \right\rvert
  \le
  \sum_{j \in J(\vc{x})}
  I_{n,k}(\vc{x}, j)
\]
where
\[
  I_{n,k}(\vc{x}, j)
  =
  \int_{\vc{y} \in \mathbb{D}(n,k,\vc{x})}
    \sqrt{k} \lvert y_j - x_j \rvert
    \int_{0}^1
      \left\lvert
	\dot{f}_j \left( \vc{x} + t(\vc{y} - \vc{x}) \right)
	-
	\dot{f}_j( \vc{x} )
      \right\rvert
    \diff t
  \diff \nu_{n,k,\vc{x}}(\vc{y}).
\]
We will find an upper bound for $I_{n,k}(\vc{x},j)$. 

Let $K > 0$ be such that $\lvert \dot{f}_i \rvert \le K$ for all $i \in \{1, \ldots, d\}$. Choose $\delta \in (0, M]$ and $\eps \in (0, \delta/2]$. In $I_{n,k}(\vc{x},j)$, split the integral over $\vc{y} \in \mathbb{D}(n,k,\vc{x})$ into two pieces, depending on whether $\lvert \vc{y} - \vc{x} \rvert \le \eps$ or $\lvert \vc{y} - \vc{x} \rvert > \eps$, where $\lvert \vc{z} \rvert = (z_1^2 + \cdots + z_d^2)^{1/2}$ denotes the Euclidean norm of $\vc{z} \in \RR^d$.

In $I_{n,k}(\vc{x},j)$, the integral over $\vc{y} \in \mathbb{D}(n,k,\vc{x})$ for which $\lvert \vc{y} - \vc{x} \rvert > \eps$ is bounded by
\begin{align*}
  2K \sqrt{k}
  \int_{\mathbb{D}(n,k,\vc{x})} 
    \lvert y_j - x_j \rvert 
    \I\{ \lvert \vc{y} - \vc{x} \rvert > \eps \} 
  \diff \nu_{n,k,\vc{x}}(\vc{y}) 
  &\le
  \frac{2K\sqrt{k}}{\eps} 
  \int_{\mathbb{D}(n,k,\vc{x})}
  \lvert \vc{y} - \vc{x} \rvert^2
  \diff \nu_{n,k,\vc{x}}(\vc{y}) \\
  &=
  \frac{2K\sqrt{k}}{\eps} 
  \sum_{i=1}^d
  \frac{1}{k^2} \cdot n \cdot \tfrac{k}{n} x_i \cdot (1 - \tfrac{k}{n} x_i) \\
  &\le
  \frac{2KMd}{\eps \sqrt{k}}.
\end{align*}

To analyse the integral in $I_{n,k}(\vc{x}, j)$ over those $\vc{y} \in \mathbb{D}(n,k,\vc{x})$ for which $\lvert \vc{y} - \vc{x} \rvert \le \eps$, we need to distinguish between two cases: $x_j < \delta$ and $x_j \ge \delta$. In case $x_j < \delta$, the integral is simply bounded by
\begin{align*}
  2K 
  \int_{\mathbb{D}(n,k,\vc{x})} 
    \sqrt{k} \lvert y_j - x_j \rvert 
  \diff \nu_{n,k,\vc{x}}(\vc{y}) 
  &\le
  2K \sqrt{k} 
  \sqrt{ 
    \tfrac{1}{k^2} 
    \cdot n 
    \cdot \tfrac{k}{n}x_j 
    \cdot (1 - \tfrac{k}{n}x_j)
  } \\
  &\le
  2K \sqrt{x_j}
  <
  2K \sqrt{\delta}.
\end{align*}
In case $x_j \ge \delta$, the inequality $\lvert \vc{y} - \vc{x} \rvert \le \eps \le \delta/2$ and the fact that $\vc{x} \in [0, M]^d$ and $\vc{y} \in [0, \infty)^d$ imply that $\vc{y}$ belongs to the set
\[
  \mathbb{B}_j(M, \delta)
  =
  \{ \vc{z} \in [0, M+\delta/2]^2 : z_j > \delta/2 \}.
\]
Let
\[
  \omega_j(M, \delta, \eps)
  =
  \sup \{
    \lvert 
      \dot{f}_j( \vc{z}_1 ) 
      - 
      \dot{f}_j( \vc{z}_2 ) 
    \rvert :
    \vc{z}_1, \vc{z}_2 \in \mathbb{B}_j(M, \delta), \,
    \lvert \vc{z}_1 - \vc{z}_2 \rvert \le \eps
  \}.
\]
The integral in $I_{n,k}(\vc{x}, j)$ over $\vc{y} \in \mathbb{D}(n,k,\vc{x})$ such that $\lvert \vc{y} - \vc{x} \rvert \le \eps$ is bounded by
\[
  \omega_j(M, \delta, \eps)
  \int_{\mathbb{D}(n,k,\vc{x})}
    \sqrt{k} \lvert y_j - x_j \rvert
  \diff \nu_{n,k,\vc{x}}(\vc{y})
  \le
  \omega_j(M, \delta, \eps)
  \sqrt{x_j}
  \le
  \omega_j(M, \delta, \eps)
  \sqrt{M}
\]
using the Cauchy--Schwarz inequality and the first two moments of the binomial distribution.

Assembling all the pieces, we obtain
\[
  \sup_{\vc{x} \in [0, M]^d}
  \left\lvert
    \Delta_{n,k}(\vc{x})
  \right\rvert \\
  \le
  \frac{2d^2KM}{\eps \sqrt{k}}
  +
  2dK \sqrt{\delta}
  +
  \sqrt{M}
  \sum_{j=1}^d \omega_j(M, \delta, \eps).
\]
As a consequence, for every $\delta \in (0, M]$ and every $\eps \in (0, \delta/2]$, we have
\[
  \limsup_{n \to \infty}
  \sup_{\vc{x} \in [0, M]^d}
  \left\lvert
    \Delta_{n,k}(\vc{x})
  \right\rvert 
  \le
  2dK \sqrt{\delta}
  +
  \sqrt{M}
  \sum_{j=1}^d \omega_j(M, \delta, \eps).
\]
The function $\dot{f}_j$ is continuous and thus uniformly continuous on the compact set $\mathbb{B}_j(M, \delta)$. As consequence, $\inf_{\eps > 0} \omega_j(M, \delta, \eps) = 0$. The limit superior in the previous display is thus bounded by $2dK \sqrt{\delta}$, for all $\delta \in (0, M]$, and must therefore be equal to zero. 
\end{proof}

\section*{Acknowledgments}

A.~Kiriliouk gratefully acknowledges support from the Fonds de la Recherche Scientifique (FNRS).

J.~Segers gratefully acknowledges funding by contract ``Projet d'Act\-ions de Re\-cher\-che Concert\'ees'' No.\ 12/17-045 of the ``Communaut\'e fran\c{c}aise de Belgique'' and by IAP research network Grant P7/06 of the Belgian government (Belgian Science Policy).

L.~Tafakori would like to thank the Australian Research Council for supporting this work through Laureate
Fellowship FL130100039.

\bibliographystyle{chicago} 
\bibliography{references} 

\end{document}